\documentclass[12pt,a4paper]{article}

\usepackage[english]{babel}

\usepackage[letterpaper,top=2cm,bottom=2cm,left=3cm,right=3cm,marginparwidth=1.75cm]{geometry}

\usepackage[colorlinks=true, allcolors=blue]{hyperref}
\usepackage{amsmath,amssymb,amsthm}
\usepackage[numbers,sort&compress]{natbib}
\usepackage{tikz}
\usepackage{fullpage}

\usetikzlibrary{
        arrows.meta,
        decorations.markings,
        }

\usepackage{inputenc}

\newtheorem{Theorem}{Theorem}[section]

\newtheorem{Lemma}[Theorem]{Lemma}

\theoremstyle{definition}
\newtheorem{Definition}[Theorem]{Definition}
\newtheorem{Observation}[Theorem]{Observation}

\title{Classes of Phylogenetic Networks that are Robustly Orientable}
\author{Thomas Britz, Michael Hendriksen, Kaitlyn O'Reilly}

\tikzset{
    midarrow/.style={
        decoration={
            markings,
            mark=at position 0.5 with {\arrow{Latex[scale=1.2]}}
        },
        postaction={decorate}
    }
}

\begin{document}
\maketitle

\begin{abstract}
Standard phylogenetic reconstruction techniques often yield unrooted phylogenetic networks;
these are subsequently rooted to infer evolutionary history. 
A common problem in this process is to determine the structural classes to which the resulting network will belong. 
In this paper, we investigate unrooted networks in which the choice of any root results in 
a valid rooted phylogenetic network, a property we define as {\em robustly orientable}. 
We then establish a strict structural condition for this class, specifically, 
that an unrooted network is robustly orientable if and only if it contains no sink components. 
We also show that if an unrooted network is level-$2$ or less, or if it is tree-based, 
then it is robustly orientable. 
Furthermore, we define an unrooted network to be {\em robustly class $\mathcal C$} 
if the choice of any root results in a network belonging to class $\mathcal C$. 
We demonstrate that an unrooted network is robustly tree-child or robustly stack-free 
if and only if it is level-$1$ or less. 
Finally, we show that a phylogenetic network is robustly normal if and only if it is a phylogenetic tree. 
\end{abstract}

\section{Introduction}

Standard phylogenetic reconstruction techniques often result in an unrooted phylogenetic tree or network, 
which is then rooted through any number of techniques, e.g., 
through the use of outgroups or molecular clocks~\cite{felsenstein2003inferring}. 
Since the choice of root is not uniquely determined, there may be multiple plausible candidate root locations. 
It is therefore natural to ask whether 
desirable structural properties of the resulting rooted network depend on the choice of the root. 
In particular, once such a phylogenetic network is rooted, 
it is generally of interest as to which class of phylogenetic network it belongs. 
For instance, tree-child networks~\cite{cardona2009comparison} are both mathematically tractable 
(e.g., the Tree Containment Problem is solvable in linear time on tree-child networks~\cite{janssen2020linear} 
and tree-child networks are identifiable under certain models~\cite{francis2018identifiability}) 
and well-motivated biologically, as every internal vertex has a tree path to a leaf, 
which is a reasonable property to expect in many realistic biological contexts~\cite{kong2022classes}. 

The question of whether an unrooted phylogenetic network, once rooted, belongs to a particular class, 
has received a considerable amount of recent attention. 
For instance, Huber et al.~\cite{huber2024orienting} show that 
determining whether an unrooted phylogenetic network has a choice of root 
for which the resulting rooted network belongs to some class $\mathcal{C}$ 
is fixed-parameter tractable for classes that meet certain technical conditions. 
Bulteau et al.~\cite{bulteau2023turning} characterise when an unrooted phylogenetic network can 
even be oriented as a rooted phylogenetic network at all, 
and they provide an algorithm for doing so. 
In particular, determining whether an unrooted phylogenetic network can be oriented as 
a rooted tree-child network has received additional attention. 
For instance, Maeda et al.~\cite{maeda2023orienting} provide some necessary conditions, 
such as a tight upper bound on the size of tree-child orientable graphs, 
and show that any unrooted phylogenetic network can be made into one 
that can be oriented as a tree-child network through the addition of extra leaves. 
Furthermore, van Iersel et al.~\cite{van2026class} show that recognising whether 
an unrooted phylogenetic network admits a rooted tree-child orientation is NP-hard. 

In the present manuscript, 
we consider when an unrooted phylogenetic network can robustly be oriented to be a member of a particular class, 
by which we mean that any choice of root will result in a rooted phylogenetic network that is 
a member of that class. 
We call an orientation of an unrooted network \emph{phylogenetic}, 
if it meets the requirements to be a rooted phylogenetic network 
- a single root, at least one leaf, the only vertices of degree $0$ are leaves, and it is acyclic. 
If, for some given network, there exists some possible phylogenetic orientation for each choice of root location, 
then we call the network \emph{robustly orientable}.
Additionally, we call an unrooted robustly orientable phylogenetic network \emph{robustly} $\mathcal{C}$ if 
any phylogenetic orientation on this network is 
a member of some class $\mathcal{C}$ for any choice of root location. 

We first characterise those networks that are robustly orientable 
as precisely those networks that do not contain sink components. 
This can be viewed as a robust analogue to a result of Bulteau et al.~\cite{bulteau2023turning}, 
in which they showed that there must exist at least one phylogenetic orientation 
if there are fewer than two sink components. 
We also demonstrate that robust orientability is satisfied by several familiar network classes, 
in particular tree-based networks and level-$k$ networks for $k \le 2$, 
and we provide a network of minimum level and number of vertices that is not robustly orientable.

Interestingly, we also show that robustness collapses relatively complex properties in the rooted setting 
into simple ones in the unrooted setting. 
We show that an unrooted robustly orientable phylogenetic network is robustly tree-child if and only if 
it is robustly stack-free, and it is robustly stack-free if and only if it is level-$1$ or less. 
We additionally show that a phylogenetic network is robustly normal if and only if 
it is a phylogenetic tree. 
We end the manuscript with a discussion and possible future research directions.

\section{Preliminaries}

Throughout this manuscript we will consider both rooted and unrooted phylogenetic networks. 
All networks in this manuscript will be binary unless explicitly noted otherwise. 
We also note that all networks considered in this manuscript are unlabelled, 
as the results depend only on network \emph{shape}, 
but any results below will apply equally to labelled phylogenetic networks. 
We will begin by defining some familiar graph terminology, 
mostly following that of~\cite{berge2001theory}.

\begin{Definition}
    A {\em graph} $G$ is an ordered pair $(V(G),E(G))$ 
    where $V(G)$ is a set of elements, called the {\em vertices} of $G$, and 
    where $E(G)$ is a multiset of pairs of vertices, called the {\em edges} of~$G$.
    The graph $G$ is either {\em undirected}, 
    in which case each edge of $G$ is a set of two vertices, 
    or {\em directed}, in which case each edge is an ordered pair of vertices.
    The \emph{underlying undirected graph} of a directed graph $G$ is 
    the undirected graph obtained from $G$ by replacing each edge $(u,v)$ by $\{u,v\}$, 
    and reducing the multiplicity of any edge with multiplicity greater than one to a multiplicity of one. 
    A graph is called \emph{simple} if it contains no edges of multiplicity two or greater, 
    referred to as \emph{parallel} edges, 
    and if each edge contains two distinct vertices. 
    \end{Definition}

We now define paths and connectivity in graphs.

    \begin{Definition}
    A \emph{path} in a directed or undirected graph $G$ is a sequence $(v_1,e_1,v_2, \dots, e_{k-1},v_k)$ 
    that alternates between vertices and edges, 
    such that $e_i=\{v_i,v_{i+1}\}$ in the undirected case 
          and $e_i= (v_i,v_{i+1} )$ in the   directed case.
    We will occasionally refer to the latter as a \emph{directed path}. 
    A path is \emph{simple} if no vertex or edge appears more than once in the path.
    A path is a \emph{cycle} if $v_1=v_k$, 
    and a cycle $(v_1,e_1,v_2, \dots,v_{k-1}, e_{k-1},v_1)$ is \emph{simple} 
    if $(v_1,\dots,v_{k-1})$ is a simple path.
    A graph is \emph{connected} if there exists a path from $u$ to $v$ for any pair of vertices $u,v$ in the graph.
    A directed graph $G$ is \emph{weakly connected} if the underlying undirected graph of $G$ is connected. 
    An edge $e=(u,v)$ in a directed graph is referred to as a \emph{shortcut} 
    if there exists a directed path from $u$ to $v$ that does not contain $e$.
\end{Definition}

We can now define the main objects of study: 
rooted and unrooted phylogenetic networks.

\begin{Definition}
	An {\em unrooted phylogenetic network} is a finite, simple, connected, undirected graph $G=(V,E)$ with at least one vertex of degree $1$. 
    The vertices of degree $1$ are referred to as {\em leaves}.
    Each non-leaf vertex is an {\em internal vertex}.
    If each internal vertex has degree $3$, then the network is referred to as \emph{binary}.
\end{Definition}

\begin{Definition}
    A {\em rooted phylogenetic network} is 
    a finite, simple, weakly connected, directed, acyclic graph $T=(V,E)$ with 
    a unique vertex of in-degree $0$ and out-degree $2$, 
    referred to as the \emph{root} and denoted $\rho$, 
    with at least one vertex of degree $1$, 
    where the vertices of degree~$1$ are referred to as {\em leaves} and are the only vertices with out-degree $0$. 
    Each non-leaf, non-root vertex is an {\em internal vertex}.
    If all vertices except the root have degree $3$, then the network is referred to as \emph{binary}.
\end{Definition}

We also require some language to describe the edges and vertices in these networks.

\begin{Definition}
    If a vertex in a rooted phylogenetic network has in-degree less than or equal to $1$, 
    then it is referred to as a \emph{tree} vertex; 
    otherwise, it is referred to as a \emph{reticulation} vertex. 
    An incoming edge to a reticulation vertex is referred to as a \emph{reticulation edge}; 
    otherwise, it is a \emph{tree edge}.
    If there is a directed edge from vertex~$u$ to vertex~$v$, 
    then $u$ is referred to as a \emph{parent} of~$v$, 
    and  $v$ is referred to as a \emph{child}  of~$u$. 
\end{Definition}

\begin{Definition}
    A rooted phylogenetic network is referred to as \emph{tree-child}~\cite{cardona2009comparison} 
    if every internal vertex has at least one child that is a tree vertex.
    A rooted phylogenetic network is referred to as \emph{stack-free} 
    if every reticulation vertex only has tree children.
\end{Definition}

Another common class of phylogenetic networks is the class of normal networks, 
introduced by Willson~\cite{willson2010properties}.

\begin{Definition}
    A rooted phylogenetic network is referred to as \emph{normal} if 
    it is tree-child and has no shortcuts.
\end{Definition}

\begin{Definition}
    Let $G$ be an undirected graph, and let $e \in E(G)$. 
    An \emph{orientation} of $(G,e)$, denoted $o(G,e)$, 
    is any directed graph obtained from $G$ by subdividing $e$ to form a degree-$2$ vertex, 
    and then replacing each undirected edge $\{u,v\}$ by a directed edge, 
    either $(u,v)$ or $(v,u)$.
\end{Definition}

\begin{Definition}
    An orientation $o(G,e)$ is \emph{phylogenetic}  
    if it has exactly one root; 
       it is acyclic; 
       it has at least one leaf; and
       its leaves are the only vertices with out-degree~$0$.
\end{Definition}

Observe that any phylogenetic orientation of a graph is a rooted phylogenetic network; 
however, the converse is not necessarily true, as per the following observation.

\begin{Observation}
    Not all rooted phylogenetic networks can come from orienting an unrooted one. 
    For example, the two-leaf rooted tree with a single reticulation edge added, 
    depicted in Figure~\ref{fig:two-leaves}, 
    when the root is suppressed and edges made undirected, 
    becomes an unrooted network with two parallel edges. 
    This would not be robustly orientable even if parallel edges were permitted, 
    since only one of the parallel edges could be chosen to be the root.
\end{Observation}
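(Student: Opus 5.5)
The plan is to prove the Observation by exhibiting one concrete rooted phylogenetic network and checking directly that it does not arise as a phylogenetic orientation $o(G,e)$ of any unrooted phylogenetic network $G$. The witness is the network of Figure~\ref{fig:two-leaves}. Take the root $\rho$ with children $u$ and $v$, where $u$ is a tree vertex and $v$ is a reticulation vertex. Let $u$ have children $v$ and the leaf $\ell_1$, and let $v$ have the single child $\ell_2$.

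First we verify that this directed graph is a binary rooted phylogenetic network. It is simple, since the edges $(\rho,u),(\rho,v),(u,v),(u,\ell_1),(v,\ell_2)$ are distinct ordered pairs. It is acyclic, because $\rho$ precedes $u$, which precedes $v$, and the leaves come last. It is weakly connected. The root $\rho$ has in-degree $0$ and out-degree $2$, and $u$ and $v$ have degree $3$. Finally, $\ell_1$ and $\ell_2$ are exactly the vertices of out-degree $0$.

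Second, we argue that any network of the form $o(G,e)$ must, after deleting the root and undoing the subdivision, return the undirected graph $G$. By the definition of orientation, the subdivision vertex of $e$ is the unique degree-$2$ vertex, and in a phylogenetic orientation it must be the root. Its two neighbours are the endpoints of $e$. Reversing the construction therefore means suppressing $\rho$ and forgetting directions.

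Third, we apply this reversal to the witness. Suppressing $\rho$ creates an undirected edge $\{u,v\}$, and this duplicates the existing edge $\{u,v\}$. The result has parallel edges, so it is not simple and hence not an unrooted phylogenetic network. This contradicts the requirement that the witness equals $o(G,e)$ for some simple $G$.

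The remark about robust orientability needs a separate argument, because it concerns a multigraph and not a genuine network. Even if parallel edges were allowed, rooting on the non-parallel leaf edge $\{u,\ell_1\}$ places the root away from the parallel pair. The two parallel edges $u$–$v$ must then be oriented, and this forces either $v$ to have out-degree $0$ or a directed cycle. The only delicate point in the whole proof is the bookkeeping in the second step: we must be precise that the root of $o(G,e)$ is exactly the subdivision vertex, so that suppressing it is legitimate.
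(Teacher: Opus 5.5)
Your argument for the main claim is correct and matches the paper. The witness has a unique degree-$2$ vertex, so any realisation as $o(G,e)$ must have $\rho$ as the subdivision vertex. Suppressing $\rho$ then doubles the edge $\{u,v\}$, so $G$ would not be simple.

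The gap is in your last paragraph, where the claimed dichotomy is false. In the multigraph, $v$ is still incident with the leaf edge $\{v,\ell_2\}$. After rooting on $\{u,\ell_1\}$, take the orientation $(\rho,u)$, $(\rho,\ell_1)$, both copies of $\{u,v\}$ as $(u,v)$, and $(v,\ell_2)$. This orientation is acyclic, $\rho$ is the only vertex of in-degree $0$, and $\ell_1,\ell_2$ are the only vertices of out-degree $0$; here $v$ has in-degree $2$ and out-degree $1$. So neither a non-leaf sink nor a directed cycle is forced.

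In fact, under the paper's listed conditions for a phylogenetic orientation (one root, acyclic, at least one leaf, leaves the only sinks), every edge of the multigraph admits one. Rooting on $\{v,\ell_2\}$ is symmetric, and rooting on either parallel edge gives the network of Figure~\ref{fig:two-leaves}(a).

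What actually fails is simplicity, and this is what the paper's phrase ``only one of the parallel edges could be chosen to be the root'' points to. Unless the root subdivides one of the two parallel edges, the orientation keeps two parallel arcs $(u,v)$. It is then not a rooted phylogenetic network, which by definition must be simple.

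To repair your argument, replace the dichotomy with this simplicity obstruction. You should also state explicitly that the remark needs each rooting of the multigraph to yield a simple rooted phylogenetic network, not merely an orientation meeting the four listed conditions.
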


\begin{figure}[ht]
    \centering
\begin{tikzpicture}
[
    scale=0.6,
    dot/.style={circle, fill=black, inner sep=1.67pt}
]

% Root (no dot)
\coordinate (root) at (3,4);

% Internal vertices
\node[dot] (a) at (1.5,2) {};
\node[dot] (b) at ( 5,1) {};

% Leaves (no dots)
\coordinate (l1) at (0,0);
\coordinate (l2) at ( 5.6,0);

\node at (2.8,-1) {(a)};

% Tree edges
\draw[midarrow, thick] (root) -- (a);
\draw[midarrow, thick] (root) -- (b);
\draw[midarrow, thick] (a) -- (l1);
\draw[midarrow, thick] (b) -- (l2);

% Reticulation edge
\draw[midarrow, thick] (a) -- (b);

\end{tikzpicture}
\hspace{2cm}
\begin{tikzpicture}
[
    scale=0.6,
    dot/.style={circle, fill=black, inner sep=1.67pt}
]

    \coordinate (a) at (0,0) {};
    \node[dot] (b) at (3,0) {};
    \node[dot] (c) at (6,0) {};
    \coordinate (d) at (9,0) {};

    \node at (4.5,-2) {(b)};

    \draw (a) -- (b);
    \draw (c) -- (d);
    \draw    (b) to[out=45,in=135] (c);
    \draw    (b) to[out=-45,in=-135] (c);
\end{tikzpicture}
    \caption{(a) An example of a rooted phylogenetic network $N$ that 
        cannot be obtained by orienting an unrooted phylogenetic network. 
    (b) The network $N$ with orientations removed, resulting in a graph with a pair of parallel edges, 
    which is not a simple graph and thus not a phylogenetic network.}
    \label{fig:two-leaves}
\end{figure}

We note the following obvious but useful fact about phylogenetic orientations.

\begin{Lemma}\label{l:oneroot}
    Let $G$ be an acyclic directed graph for which the underlying undirected graph is connected, 
    with precisely one vertex $\rho$ of in-degree~$0$. 
    Then every vertex $v$ in $G$ has a directed path from $\rho$ to~$v$.
\end{Lemma}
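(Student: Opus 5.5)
The plan is a backward walk from $v$. Since $G$ is finite and acyclic, any walk that follows incoming edges backwards cannot go on forever, and it can only stop at a vertex of in-degree $0$, which must be $\rho$. Concretely, fix a vertex $v$. If $v=\rho$, the trivial path $(\rho)$ suffices. Otherwise $v$ has in-degree at least $1$, since $\rho$ is the unique vertex of in-degree $0$. So we may choose a parent $v_1$ of $v$, that is, an edge $(v_1,v)$.

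We then iterate. Having built a sequence $v=v_0, v_1, \dots, v_k$ with $(v_{i+1},v_i)\in E(G)$ for each $i$, we stop if $v_k=\rho$. Otherwise $v_k$ has positive in-degree, and we choose a parent $v_{k+1}$ of $v_k$.

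The key step is to show this process terminates. Reversing the sequence $v_k,\dots,v_1,v_0$ gives a directed path in $G$. If some vertex repeated, say $v_i=v_j$ with $i<j$, then the segment from $v_j$ to $v_i$ would be a directed cycle, contradicting acyclicity. So all the $v_i$ are distinct. Since $V(G)$ is finite, the sequence has length at most $|V(G)|$. It can only halt at a vertex with no parent, and by uniqueness that vertex is $\rho$. Reading the sequence in reverse then gives a (simple) directed path from $\rho$ to $v$.

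There is no real obstacle here. The only point needing care is that termination relies on finiteness together with acyclicity, and that uniqueness of the in-degree-$0$ vertex forces the endpoint to be $\rho$. Connectivity of the underlying graph is not actually needed for the argument, although it is consistent with the hypotheses. An equivalent alternative would be to take a longest directed path ending at $v$: its start vertex has no parent, since otherwise the path could be extended or a cycle would be created, so the start vertex is $\rho$.
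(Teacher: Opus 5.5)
Your argument is correct. The paper gives no proof of this lemma; it only calls it an ``obvious but useful fact'', and your backward walk from $v$ (or, equivalently, a longest directed path ending at $v$) is the standard justification it implicitly relies on. You are also right that connectivity of the underlying graph is redundant. Finiteness, which the paper's statement leaves implicit, is genuinely needed: the infinite digraph $\cdots\to u_2\to u_1\to u_0\leftarrow\rho$ satisfies every other hypothesis, yet no directed path runs from $\rho$ to $u_1$.
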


We can now define the main objects of study for the present manuscript. 
We first consider undirected graphs that are robustly orientable, 
which means that they are able to produce a phylogenetic network for any choice of root.

\begin{Definition}
    An undirected graph $G$ is \emph{robustly orientable} if 
    there exists a phylogenetic orientation $o(G,e)$ for each edge $e$ in~$G$.
\end{Definition}

We will also consider unrooted networks that can robustly be oriented to produce 
members of a class of rooted networks.

\begin{Definition}
    Let $N$ be an unrooted network, let $e \in E(N)$ 
    and let $\mathcal{C}$ be a class of rooted phylogenetic networks. 
    Then the tuple $(N,e)$ is referred to as \emph{robustly $\mathcal{C}$} if 
    every phylogenetic orientation of $(N,e)$ is a rooted phylogenetic network in~$\mathcal{C}$. 
    Furthermore, $N$ is referred to as \emph{robustly $\mathcal{C}$} if, 
    for all $e \in E(N)$, $(N,e)$ is robustly~$\mathcal{C}$. 
\end{Definition}

With the above definition, we can refer to an unrooted network as, for example, 
robustly tree-child or robustly stack-free without ambiguity. 
We note in particular that, 
while robust orientability requires 
the existence of \emph{some} phylogenetic orientation for each choice of root, 
robust $\mathcal{C}$-ness of a network requires 
\emph{every} phylogenetic orientation for \emph{every} choice of root to be a member of class $\mathcal{C}$.

\begin{Definition}
    Let $G$ be a connected directed or undirected graph. 
    A \emph{cut-edge} of $G$ is any edge $e\in E(N)$ such that 
    $G \backslash \{e\}$ is not a connected graph, when considering the underlying undirected graph. 
\end{Definition}

\begin{Definition}\label{d:oriented_away}
Suppose that $G$ is a connected undirected graph with a cut-edge $c$, 
and let $u$ be a vertex of~$G$.
Let $C$ be the connected component of $G \backslash \{c\}$ that contains $u$.
Let $v$ be the vertex of $c$ that is in $C$ and let $w$ be the other vertex of $c$.
Consider any orientation $o(G,e)$.
If $c$ is oriented as $(v,w)$ in $o(G,e)$, 
then $c$ is said to be \emph{oriented away from}~$u$ in $o(G,e)$.
\end{Definition}

\begin{Theorem}\label{t:oriented_away}
    Let $o(N,e)$ be a phylogenetic orientation with root $\rho$. 
    Then every cut-edge in $N$ is oriented away from $\rho$ in $o(N,e)$.
\end{Theorem}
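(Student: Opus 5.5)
The plan is to argue by contradiction and use Lemma~\ref{l:oneroot} together with the cut-edge structure. Let $c=\{v,w\}$ be a cut-edge of $N$. There is a small subtlety when $c=e$: in $o(N,e)$ the edge $c$ is subdivided by $\rho$, and both halves point away from $\rho$ because $\rho$ has in-degree $0$. So in that case the claim holds trivially, interpreting ``oriented away'' for the two halves. We may therefore assume $c\neq e$. Then $c$ is still a cut-edge of the underlying graph of $o(N,e)$, since subdividing a different edge does not change which edges disconnect the graph.

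Let $C$ be the component of $o(N,e)\setminus\{c\}$ containing $\rho$, let $v$ be the endpoint of $c$ in $C$, and let $w$ be the other endpoint, lying in a component $D$. Suppose, for contradiction, that $c$ is oriented as $(w,v)$. Since $o(N,e)$ is phylogenetic, it is acyclic, weakly connected and has the unique in-degree-$0$ vertex $\rho$. By Lemma~\ref{l:oneroot} there is then a directed path $P$ from $\rho$ to $w$.

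The key step is to show that $P$ cannot exist. The path starts in $C$ and ends in $D$. Since $c$ is a cut-edge, every path in the underlying graph from $C$ to $D$ must use $c$. Hence $P$ must traverse $c$, and as a directed path it can only do so in the direction $(v,w)$. That contradicts the assumption that $c$ is oriented $(w,v)$. Therefore $c$ is oriented as $(v,w)$, which is exactly ``oriented away from $\rho$'' in the sense of Definition~\ref{d:oriented_away}.

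The main obstacle is bookkeeping rather than substance. Definition~\ref{d:oriented_away} is phrased for $N$ with $u$ a vertex of $N$, whereas $\rho$ is a new vertex created by subdividing $e$. So we should interpret $C$ as the component containing the edge $e$ (equivalently, either endpoint of $e$ when $e\ne c$), check that this is well defined, and handle the case $e=c$ separately as described above.
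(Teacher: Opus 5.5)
Your proof is correct and follows the paper's argument: if $c$ were oriented towards the side containing $\rho$, then no directed path from $\rho$ could reach the other component, contradicting Lemma~\ref{l:oneroot}. Your separate handling of the case $c=e$, and of the fact that $\rho$ is not a vertex of $N$, is extra bookkeeping that the paper's one-line proof leaves implicit.
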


\begin{proof}
    Let $c$ be a cut-edge in $N$ and consider the two components obtained by deleting $c$. 
    If $c$ is not oriented away from $\rho$, 
    then no vertex $v$ in the component not containing $\rho$ 
    has a directed path from $\rho$ to $v$, 
    contradicting Lemma~\ref{l:oneroot}.
\end{proof}

\begin{Definition}
    A \emph{biconnected component}, or \emph{blob}, 
    of a directed or undirected graph $G$ is a maximal subgraph $H$ of $G$ 
    that is not a single vertex, such that $H \backslash \{e\}$ is (weakly) connected for any edge $e \in E(H)$.
\end{Definition}

\begin{Definition}
    Let $B$ be a blob of a directed or undirected graph $G$, 
    and let $C(B)$ be the set of cut-edges $\{u,v\}$ such that 
    either $u$ or $v$ is contained in $B$. 
    The \emph{decorated blob of $B$} is the subgraph of $G$ consisting of $B$ and $C(B)$,
    that is, the union of $B$ with all of its adjacent cut-edges. 
    We will refer to any graph as a decorated blob if it can be expressed as such a union. 
\end{Definition}

\begin{Definition}
    A \emph{sink component} is a biconnected component that is incident with exactly one cut-edge.
\end{Definition}

Given any biconnected undirected graph~$G$ and any edge $e=\{s,t\}$, 
the graph $G\setminus\{e\}$  can be oriented, 
for instance using {\em $st$-numbering}, 
to give an acyclic directed graph in which $s$ has in-degree~$0$ and $t$ has out-degree~$0$.
In fact, the following stronger statement is true;
see~\cite{LempelEvenCederbaum1966,even1976computing}.
  
\begin{Theorem} \label{ro:st_orientation}
    Let $G$ be a biconnected undirected graph. 
    Then, for each edge $e=\{s,t\}$ of~$G$, 
    there is an orientation of $G\setminus\{e\}$ that is an acyclic directed graph, where
    $s$ has in-degree~$0$, $t$ has out-degree~$0$, 
    and every other vertex has positive in- and out-degree.
\end{Theorem}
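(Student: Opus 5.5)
The plan is to prove the statement by induction along an \emph{open ear decomposition}, after first fixing the right notion of biconnectivity. The blob definition above only asks for $2$-edge-connectivity, and for general graphs that is not enough. For instance, in two triangles sharing a vertex, the triangle not containing $e$ would need one of its two degree-$2$ vertices to be a source or a sink. The relevant setting, however, is binary networks, where every vertex has degree at most $3$. There a cut vertex would split its at most three incident edges as $1+2$ between two sides, and the single edge on one side would be a cut-edge. Hence a blob with at least three vertices is $2$-vertex-connected, and I will work with that hypothesis. This is the classical setting of~\cite{LempelEvenCederbaum1966,even1976computing}.

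\textbf{Step 1 (ear decomposition).} By Whitney's theorem, a $2$-vertex-connected graph $G$ has an open ear decomposition $G = C_0 \cup P_1 \cup \dots \cup P_m$ with the following properties.
\begin{itemize}
\item $C_0$ is a cycle containing $e=\{s,t\}$.
\item Each $P_i$ is a path whose two endpoints are distinct vertices of $H_{i-1}=C_0\cup P_1\cup\dots\cup P_{i-1}$.
\item The interior vertices of $P_i$ are new.
\end{itemize}
The requirement that $C_0$ contain the prescribed edge $e$ is obtained from the usual proof: any edge of a $2$-connected graph lies on a cycle, and the ears can be grown from any starting cycle. I expect this step to be the main obstacle, in the sense that it is the only place where vertex-biconnectivity is genuinely used. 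I would cite it rather than reprove it.

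\textbf{Step 2 (base case and invariant).} Orient $C_0\setminus\{e\}$ as a directed path from $s$ to $t$. I will maintain the following invariant for $H_i\setminus\{e\}$: it is acyclic, $s$ is its unique vertex of in-degree~$0$, $t$ is its unique vertex of out-degree~$0$, and every other vertex has positive in- and out-degree. By Lemma~\ref{l:oneroot} and its dual (acyclicity together with a unique sink), every vertex is then reachable from $s$ and can reach~$t$.

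\textbf{Step 3 (adding an ear).} Let $P_i$ have endpoints $x\neq y$. Since $H_{i-1}\setminus\{e\}$ is acyclic, there cannot be directed paths both from $x$ to $y$ and from $y$ to $x$. Choose labels so that there is no directed path from $y$ to $x$, and orient $P_i$ as a directed path from $x$ to $y$.
\begin{itemize}
\item \emph{Acyclicity.} A new cycle would have to traverse $P_i$ from $x$ to $y$ and then return from $y$ to $x$ inside $H_{i-1}$, which is impossible.
\item \emph{Interior vertices.} Each interior vertex of $P_i$ has in-degree and out-degree exactly $1$.
\item \emph{The source $s$.} If $y=s$, then since $x$ is reachable from $s$ there is a directed path from $y$ to $x$, contradicting the choice of orientation. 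So $s$ never gains an incoming edge.
\item \emph{The sink $t$.} If $x=t$, then since $y$ can reach $t$ there is a directed path from $y$ to $x$, again a contradiction. So $t$ never gains an outgoing edge.
\item \emph{Both endpoints special.} If $\{x,y\}=\{s,t\}$, acyclicity forces the orientation $s\to t$, which is consistent with the two previous points.
\item \emph{Other old vertices.} These only gain edges, so their in- and out-degrees stay positive.
\end{itemize}
Thus the invariant persists, and after the last ear we obtain the required orientation of $G\setminus\{e\}$.
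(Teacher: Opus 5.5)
Your proof is correct. The paper does not prove this statement; it simply quotes the classical $st$-numbering theorem of~\cite{LempelEvenCederbaum1966,even1976computing}. You instead give a self-contained induction along an open ear decomposition started from a cycle through $e$. Each ear is oriented $x\to y$ whenever $H_{i-1}\setminus\{e\}$ has no directed path from $y$ to $x$.

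The invariant that $s$ reaches every vertex and every vertex reaches $t$ is exactly what excludes $y=s$ and $x=t$. Your acyclicity and degree checks are all sound. The paper's citation buys brevity and a linear-time algorithm. Your argument buys transparency about which hypothesis is actually needed.

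Your preliminary remark on that hypothesis is a genuine point, not pedantry. The cited results concern $2$-vertex-connected graphs, whereas the paper's definition of a blob only gives $2$-edge-connectivity. Under that weaker reading the statement is false, as your two triangles sharing a vertex show: the two degree-$2$ vertices of the triangle avoiding $e$ each need one incoming and one outgoing edge, which forces a directed triangle.

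When every vertex has degree at most $3$, you observe that a cut vertex would force a cut-edge. So every blob of a binary network is $2$-vertex-connected. This is precisely what makes the paper's later applications of this theorem to blobs of binary networks legitimate.
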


\section{Robustly Orientable Networks}

We first consider the class of robustly orientable networks - 
that is, the class of unrooted networks $N$ for which, for any choice of edge subdivided to make a root, 
there exists an orientation on $N$ that produces a directed phylogenetic network.

We note that Bulteau et al.~\cite{bulteau2023turning} consider the analogous problem for existence - 
that is, they characterise the class of unrooted networks for which 
there exists \emph{some} edge that can be subdivided to form a root, 
and then an orientation is imposed to form a directed phylogenetic network. 
By comparison, our class requires every edge to have this property. 
Their characterisation is that an unrooted network admits such an orientation 
if there are fewer than two sink components, 
while we will show that robust orientability requires the unrooted network to have no sink components. 
In this way, the present result can be considered to be a robust analogue to the work of Bulteau et al. 
Our proof is constructive, 
in the sense that we describe how to construct a valid phylogenetic orientation on such a network, 
using the aforementioned $st$-numbering.

\begin{Theorem}\label{t:no.sinks}
    An unrooted phylogenetic network $N$ is robustly orientable 
    if and only if it contains no sink component.
\end{Theorem}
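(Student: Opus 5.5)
The forward direction is proved by contraposition. Suppose $N$ has a sink component $B$, incident with exactly one cut-edge $c=\{v,w\}$, where $v\in B$. Choose the root edge $e$ to be an edge of $B$; such an edge exists because a blob is not a single vertex. Let $\rho$ be the vertex obtained by subdividing $e$. By Theorem~\ref{t:oriented_away}, every phylogenetic orientation $o(N,e)$ orients $c$ away from $\rho$, that is, as $(v,w)$. Consequently every vertex of $B$ and $\rho$ itself have all their incident edges inside $B\cup\{e\text{ subdivided}\}$, apart from the single out-edge $(v,w)$. Now look at the finite acyclic subgraph induced on $V(B)\cup\{\rho\}$. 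It has a vertex $x$ with out-degree $0$ within this subgraph, and we may take $x\neq\rho$ because $\rho$ has out-degree $2$ into $B$. If $x\neq v$, then $x$ has out-degree $0$ in $o(N,e)$. But $x$ is not a leaf: it lies in a blob, so its degree is at least $2$. This contradicts the definition of a phylogenetic orientation. If every sink of that subgraph is $v$, we count more carefully. Here $x=v$ is the unique sink and its only out-edge is $c$. Choose $x'$ to be a sink of the subgraph with $v$ deleted. Since $v$ has degree $3$ in a binary network (and degree at least $2$ inside $B$ in general), there is such an $x'\ne\rho$, and all its out-edges would have to point to $v$. This is not yet a contradiction, so instead I will use the sharper choice: take $e$ to be an edge of $B$ incident to $v$, say $e=\{v,y\}$. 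Then $v$ has in-edge from $\rho$, out-edge $c$, and its remaining $B$-edges. Any acyclic orientation of $B\cup\{\rho\}$ then has a sink among $V(B)\setminus\{v\}$ unless all edges of $B$ point into $v$, which is impossible because $\rho\to v$ and $\rho\to y$ force $y$ to have an out-edge. Pinning down this sink-counting argument cleanly, possibly by simply noting that a sink $\ne\rho$ of the finite acyclic digraph on $V(B)\cup\{\rho\}$ minus the edge $c$ exists and cannot be $v$ when $v$ has another out-edge, is the main obstacle of this direction. The fallback is to argue via the topological order of $B$: the last vertex of $B$ in that order has no out-edge except possibly $c$, so if it is $v$, the second-to-last vertex of $B$ has out-edges only to $v$ and is fine. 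Hence this needs care, and I expect to resolve it by an $e$-choice adjacent to $c$ that forces $v$ to be a source-like vertex.

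For the converse, assume $N$ has no sink component and fix any edge $e$. Contract each blob to a node to obtain the blob tree $T$ of $N$, a tree whose edges are the cut-edges and whose leaves are leaves of $N$ or blobs incident with one cut-edge. Since there are no sink components, every leaf of $T$ is a genuine leaf of $N$. Root $T$ at the blob or cut-edge containing $e$, and orient every cut-edge away from the root as forced by Theorem~\ref{t:oriented_away}. Each blob $B$ not containing $e$ then has exactly one incoming cut-edge, at a vertex $s$, and at least one outgoing cut-edge, because $B$ is not a leaf of $T$. Pick an outgoing one at a vertex $t$.

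To orient such a blob, I want an acyclic orientation of $B$ with $s$ as its unique source and no internal sinks other than vertices having an outgoing cut-edge. If $s\ne t$, let $B'$ be $B$ with an auxiliary edge $\{s,t\}$ added, which is still biconnected. Apply Theorem~\ref{ro:st_orientation} to $B'$ with this edge to orient $B$ with source $s$, sink $t$, and every other vertex having positive in- and out-degree. If $s=t$, choose a neighbour $t'$ of $s$ in $B$, apply Theorem~\ref{ro:st_orientation} to the edge $\{s,t'\}$, and then orient that edge as $(s,t')$. This orientation is still acyclic since $s$ is the source, and $t'$ gains no new out-edges, so we instead use $s$'s other cut-edge... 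In the binary case, $s=t$ is impossible, since it would give $s$ degree at least $4$. For the blob containing $e=\{a,b\}$, subdivide $e$ by $\rho$ and apply Theorem~\ref{ro:st_orientation} to $B$ and $e$ with $s=a$ and $t=b$, where $b$ is taken to be the vertex having an outgoing cut-edge. Adding $\rho\to a$ and $\rho\to b$ makes $\rho$ the unique source and gives $a$ an in-edge. If $b$ has no cut-edge, we must choose which endpoint plays $t$, and this needs some other vertex of $B$ carrying an outgoing cut-edge. In that case we use the $s=a$, $t=$ that vertex construction with the auxiliary edge, treating $e$ as the root edge. If $e$ is itself a cut-edge, everything is forced away from $\rho$ directly.

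Finally, we check the result. Acyclicity holds because each blob is acyclic and the cut-edges follow the tree order. Uniqueness of the source holds because each non-root blob vertex has an in-edge. The only out-degree-$0$ vertices are leaves because each blob sink has an outgoing cut-edge. Together these show that $o(N,e)$ is phylogenetic.
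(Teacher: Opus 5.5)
Your forward direction has a real gap: the root edge is in the wrong place, and the argument cannot be completed with that choice. With $e$ inside the sink component $B$, Theorem~\ref{t:oriented_away} orients $c$ as $(v,w)$. So $B$ has an exit, and the acyclic digraph on $V(B)\cup\{\rho\}$ may have $v$ as its only sink. With this choice no contradiction exists, because phylogenetic orientations really do exist. Take the network of Figure~\ref{fig: Simple Sink Component}, rooted at any edge of its blob. Give the subdivided blob a bipolar orientation from $\rho$ to $v$ (Theorem~\ref{ro:st_orientation}, after adding an auxiliary edge $\{\rho,v\}$ if needed), and direct $c$ to the leaf; this is a phylogenetic orientation. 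The same construction refutes your claim that, for $e=\{v,y\}$, an acyclic orientation of $B\cup\{\rho\}$ must have a sink in $V(B)\setminus\{v\}$ unless all edges point into $v$. No choice of $e$ in $B$, adjacent to $c$ or not, rescues the argument.

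The missing idea is to put the root \emph{outside} $B$, for example $e=c$ as in the paper. Then the subdivision edge at $v$ is oriented $(\rho,v)$, pointing into $B$. Since $c$ is the only cut-edge at $B$, every other edge at a vertex of $B$ lies in $B$. The finite acyclic digraph $B$ therefore has a vertex of out-degree $0$ in $o(N,e)$. That vertex has degree at least $2$, so it is not a leaf, which is a contradiction for every phylogenetic orientation of $(N,c)$.

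Your converse is essentially the paper's construction. Cut-edges are oriented away from $\rho$, and each blob gets a bipolar orientation from its entry vertex to a vertex with an outgoing cut-edge; such a vertex exists because there are no sink components. Your auxiliary edge $\{s,t\}$ is a worthwhile repair. Theorem~\ref{ro:st_orientation} is stated only for an edge $\{s,t\}$ of the graph, whereas the paper applies it to pairs that need not be adjacent. The root blob is handled most cleanly the same way: subdivide $e$, add an auxiliary edge from $\rho$ to some vertex carrying a cut-edge, and orient. This removes your case split on which endpoint of $e$ has a cut-edge. The case $s=t$ is excluded by binarity, as you observe, so the unfinished non-binary remark can be dropped.
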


\begin{proof}
    We first show that if $N$ has a sink component, 
    then it cannot be robustly orientable. 
    In particular, suppose that $N$ contains a sink component $S$, 
        let $e$ be the unique cut-edge incident with~$S$, 
    and let $v$ be the vertex of~$e$ not in~$S$.

    Assume that there exists some phylogenetic orientation $o = o(N,e)$, 
    and let $\rho$ be the root, obtained from subdividing~$e$, in that orientation.
    Let $o'$ be the subdigraph of $o$ whose underlying graph is~$S$. 
    Consider a maximal directed path $\rho,v,\ldots$ in $o$; 
    all vertices in this path, except for $\rho$, lie in $o'$. 
    Since $o$, and thus $o'$, is finite and acyclic, 
    the path must end in some vertex $z$ of $o'$ with out-degree~$0$.
    However, $S$ is a bicomponent, so $z$ is not a leaf.
    Therefore, $o$ is not a phylogenetic network, a contradiction.
    It follows that, if $N$ contains a sink component, then $N$ is not robustly orientable. 
    
    We now show that, if $N$ has no sink component, then it is robustly orientable.
    Select any edge $e =\{v,w\}\in E(N)$,
    subdivide it to form the vertex $\rho$, 
    and let $N^e$ be the resulting unrooted network. 
    We construct an orientation on $N^e$ as follows.

    If $e$ is a cut-edge of $N$, 
    then orient the adjacent edges of $\rho$ as $(\rho,v)$ and $(\rho,w)$. 
    Otherwise, $e$ is contained in a biconnected component~$S$.
    Since $N$ is a phylogenetic network, it contains at least one leaf, so $S\neq N$. 
    Since $N$ is connected, 
    $S$ has a vertex $t$ that is contained in some edge $f$ not in $S$.
    Since the vertices of $N$ have degree at most three, 
    no two components of $N$ can share a vertex, 
    so $f$ cannot belong to a component of $N$; 
    in other words, $f$ is a cut-edge.
    By Theorem~\ref{ro:st_orientation}, 
    there exists an acyclic orientation $o(S,e)$ in which 
    $\rho$ has in-degree $0$, 
    $t$ has out-degree~$0$, 
    and every other vertex has positive in- and out-degree.

    Now consider any biconnected component $S'$ not containing~$e$. 
    Since $N$ is connected, there is a path from $e$ to some vertex of $S'$;
    let $s'$ be the first vertex along that path which is contained in~$S'$. 
    Since $N$ has no sink component, 
    there must be at least one other vertex $t'$ in $S'$ that is incident to an edge not in $S'$.
    By Theorem~\ref{ro:st_orientation}, 
    there is an acyclic orientation on $S'$ such that $s'$ has in-degree $0$ in $S'$, 
    $t'$ has out-degree~$0$, 
    and every other vertex has positive in- and out-degree.
    Find such acyclic orientations for each biconnected component $S'$ in $N$.

    Note that there is no other vertex $s''$ in $S'$ that is first in $S'$ along some path from $\rho$
    since the union of these two paths with $S'$ would form 
    a biconnected component that strictly contains~$S'$, 
    a contradiction. 
    The orientation on $S'$ is therefore well-defined.

    Finally, orient each cut-edge in $N$ away from the root $\rho$, 
    as in Definition~\ref{d:oriented_away}.

    We claim that this yields a phylogenetic orientation $o$ of $N^e$.
    In particular, 
    we have already shown that the orientation $o$ is well-defined;
    also, $N$ contains at least one leaf, so $o$ does too. 
    By construction, each vertex except $\rho$ has positive in-degree, 
    so $o$ has the unique root~$\rho$;
    similarly, only the leaves of $N$ have out-degree~$0$. 
    
    Finally, the orientation $o$ is acyclic 
    since any cycle would be contained in some biconnected component, 
    a contradiction since each biconnected component has acyclic orientation.
    
    We have thus constructed a phylogenetic orientation,
    and the theorem is proved.
\end{proof}

\begin{figure}[htbp]
   \centering
   \begin{tikzpicture}
   [
   scale=0.6,
   dot/.style={circle, fill=black, inner sep=1.67pt}
   ]

   %Define extra leaf and where root goes to make contradiction
   \node[dot] (leaf) at (0,3) {};
   \node[dot] (connector) at (2,3) {};

   %Define corners where UL is upper left 
   \node[dot] (UL) at (2,5) {};
   \node[dot] (LL) at (2,1) {};
   \node[dot] (UR) at (6,5) {};
   \node[dot] (LR) at (6,1) {};

   %Draw edges 
   \draw (connector) -- (UL);
   \draw (connector) -- (LL);
   \draw (leaf) -- (connector);
   \draw (UL) -- (UR);
   \draw (UR) -- (LR);
   \draw (LR) -- (LL);
   \draw (LR) -- (UL);
   \draw (UR) -- (LL);
\end{tikzpicture}
   \caption{An unrooted network containing a sink component}
   \label{fig: Simple Sink Component}
\end{figure}

We provide an example of an unrooted binary network with a sink component in Figure~\ref{fig: Simple Sink Component};
this is the smallest example, in the sense of fewest vertices. 
However, there are several familiar classes that do not permit sink components, 
and we consider two of them now - tree-based networks and level-$k$ networks, for $k \le 2$.

\begin{Definition}
    An unrooted phylogenetic network $N$ is {\em tree-based}~\cite{francis2015phylogenetic} exactly when 
    it has a spanning tree whose leaf set coincides with the leaf set of~$N$.
\end{Definition}

\begin{Theorem}
    Every tree-based unrooted phylogenetic network is robustly orientable. 
\end{Theorem}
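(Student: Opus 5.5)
By Theorem~\ref{t:no.sinks}, it suffices to show that a tree-based unrooted phylogenetic network $N$ contains no sink component. We argue by contradiction. Suppose $N$ is tree-based with spanning tree $T$ whose leaf set equals the leaf set of $N$, and suppose $S$ is a sink component of $N$ whose unique incident cut-edge is $c=\{u,v\}$, with $u\in S$ and $v\notin S$.

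First, every cut-edge of $N$ must lie in $T$, since $T$ is a spanning tree and removing a cut-edge disconnects $N$. In particular $c\in E(T)$.

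Next, consider the subgraph $T_S$ of $T$ induced on the vertex set of $S$. Every vertex of $S$ is an internal vertex of $N$, because leaves have degree $1$ and cannot lie in a biconnected component, which has minimum degree at least $2$. Moreover, every path in $N$ from a vertex of $S$ to a vertex outside $S$ must use $c$, since $c$ is the only edge leaving $S$. Hence the component of $T-c$ containing $u$ is contained in $S$, and since $T$ spans $S$, $T_S$ is exactly that component. It is therefore a finite tree on the vertices of $S$.

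Now we locate a forbidden leaf of $T$. Since $S$ is a blob, it is not a single vertex. It in fact has at least three vertices, since $N$ is simple and biconnected components with two vertices would require parallel edges. So $T_S$ is a tree with at least two vertices and therefore has at least two leaves. At most one of these, namely $u$, receives an additional $T$-edge, namely $c$. Any other leaf $x$ of $T_S$ therefore has degree $1$ in $T$, which makes $x$ a leaf of $T$. But $x$ has degree at least $2$ in $N$, so it is not a leaf of $N$. This contradicts the assumption that the leaf set of $T$ coincides with that of $N$. Hence $N$ has no sink component, and Theorem~\ref{t:no.sinks} shows that $N$ is robustly orientable.

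The main point requiring care is the claim that $T_S$ is connected, which relies on the fact that $c$ is the unique edge joining $S$ to the rest of $N$; the remaining steps are routine degree counting.
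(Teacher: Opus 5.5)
Your proof is correct and follows the same route as the paper: reduce via Theorem~\ref{t:no.sinks} to excluding sink components, restrict the spanning tree $T$ to a sink component $S$, and note that this tree has at least two leaves, at most one of which meets the cut-edge, so some vertex of $S$ is a leaf of $T$ but not of $N$. Your version is slightly more careful than the paper's: you justify why $T$ restricted to $S$ is connected, which the paper simply asserts, and you correctly say that \emph{at most one} leaf of that tree meets the cut-edge.
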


\begin{proof}
    Let $N$ be a tree-based unrooted phylogenetic network and 
    let $T$ be a spanning tree of $N$ whose leaves are those in $N$. 
    
    Assume that $N$ contains some sink component $S$, 
    and let $e$ be the cut-edge incident with~$S$.
    Let $T'$ be the subtree of $T$ restricted to $S$; 
    it is a spanning tree of $S$.
    Since $T'$ is a tree with more than one vertex, 
    it has at least two leaves.
    The cut-edge $e$ is incident with one of these, 
    so some leaf $u$ of $T'$ is a vertex of~$S$. 
    Since $N$ is tree-based, $u$ is also a leaf of~$S$, 
    a contradiction since $S$ contains no leaf. 
    
    Therefore, $N$ contains no sink component.
    By Theorem~\ref{t:no.sinks}, $N$ is robustly orientable. 
\end{proof}

\begin{Definition}
     An unrooted phylogenetic network is \emph{level-$k$}~\cite{choy2005computing} if, 
     by deleting at most $k$ edges from each biconnected component, the resulting graph is a tree, 
     that is, it has no simple cycle, 
     and at least one biconnected component requires the deletion of $k$ edges to become a tree. 
     A directed phylogenetic network is \emph{level-$k$} if 
     its underlying unrooted phylogenetic network is level-$k$. 
\end{Definition}

Note that a directed binary phylogenetic network $N$ is level-$k$ if and only if 
the maximal number of reticulations in the biconnected components $S$ of~$N$ is~$k$.
Equivalently, the \emph{nullity} or \emph{cyclomatic number}~\cite{berge2001theory}
$n(S)=|E(S)|-|V(S)|+1$ of each biconnected component $S$ is at most~$k$, 
with at least one component having nullity~$n(S)=k$. 
For example, the network depicted in Figure~\ref{fig: Simple Sink Component} is level-$3$.

\begin{Theorem}\label{t:level2robust}
    For $k\leq 2$, every level-$k$ unrooted binary phylogenetic network $N$ is robustly orientable. 
\end{Theorem}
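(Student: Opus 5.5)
By Theorem~\ref{t:no.sinks}, it suffices to show that a level-$k$ unrooted binary phylogenetic network $N$ with $k\le 2$ has no sink component. We argue by contradiction: suppose $S$ is a sink component of $N$ and let $c$ be its unique incident cut-edge. The plan is to compute the nullity $n(S)=|E(S)|-|V(S)|+1$ from a degree count in $S$ and show it is at least $3$. This contradicts the assumption that every biconnected component has nullity at most $k\le 2$.

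First, we determine the degrees of the vertices of $S$ within $S$. Every vertex of $S$ is an internal vertex of $N$: a leaf has degree $1$ and so cannot lie in a biconnected subgraph with more than one vertex. Hence each vertex of $S$ has degree $3$ in $N$. The edges of $N$ incident with vertices of $S$ are either edges of $S$ or cut-edges; as noted in the proof of Theorem~\ref{t:no.sinks}, an edge leaving $S$ at a degree-$3$ vertex cannot belong to another blob, so it is a cut-edge. Since $S$ is a sink component, the only such cut-edge is $c$. Exactly one vertex of $S$ is incident with $c$, because $c$ has only one endpoint in $S$ (otherwise $c$ would lie in $S$). Therefore exactly one vertex of $S$ has degree $2$ in $S$, and all others have degree $3$ in $S$.

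Writing $m=|V(S)|$, the handshake lemma gives $2|E(S)|=3m-1$. Hence $m$ is odd and $|E(S)|=(3m-1)/2$, so
\[
n(S)=\frac{3m-1}{2}-m+1=\frac{m+1}{2}.
\]
We now show $m\ge 5$. Since $N$ is simple and $S$ contains a vertex of degree $3$ in $S$, we have $m\ge 4$. As $m$ is odd, $m\ge 5$, and so $n(S)\ge 3$.

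Since $N$ is level-$k$, every biconnected component has nullity at most $k\le 2$, which contradicts $n(S)\ge 3$. Hence $N$ has no sink component, and Theorem~\ref{t:no.sinks} gives that $N$ is robustly orientable.

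The main obstacle is the degree bookkeeping in $S$: we must confirm that no vertex of $S$ has degree $1$ in $S$ and that $c$ contributes to exactly one vertex. The first follows from biconnectivity, the second from $c$ being a cut-edge with a single endpoint in $S$. The smallest case $m=5$ gives nullity $3$, consistent with the paper's level-$3$ example in Figure~\ref{fig: Simple Sink Component}.
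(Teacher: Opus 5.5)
Your proof is correct and follows the paper's argument: you reduce to the absence of sink components via Theorem~\ref{t:no.sinks}, then use the degree count $2|E(S)|=3|V(S)|-1$ to show that a sink component has nullity $(|V(S)|+1)/2\ge 3$. The only cosmetic differences are that you get $|V(S)|\ge 5$ from the parity of $|V(S)|$, where the paper uses $|V(S)|\ge 4$ and integrality of the nullity, and that you justify the degree bookkeeping and the bound $|V(S)|\ge 4$ more explicitly than the paper does.
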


\begin{proof}
    For some $k\leq 2$, let $N$ be a level-$k$ unrooted binary phylogenetic network.
    If $k=0$, then $N$ has no biconnected component, 
    so $N$ is robustly orientable by Theorem~\ref{t:no.sinks}.

    Assume that $N$ contains a sink component~$S$ with incident cut-edge~$e$, 
    and let $u$ be the vertex in $S$ that is incident with~$e$.
    In $S$, each vertex has degree $3$ in~$S$, except $u$ which has degree~$2$. 
    Note that $|V(S)|\geq 4$
    and that, by the well-known handshaking lemma, 
    $2|E(S)| = 3|V(S)|-1$.
    Therefore, the level of $S$ is 
    \[
      k = |E(S)| - |V(S)| + 1 
        = \frac{1}{2}\big(3|V(S)| - 1\big) - |V(S)| + 1 
        = \frac{1}{2}\big( |V(S)| + 1\big)
      \geq\frac{5}{2}\,.
    \]
    It follows that $k\geq 3$, 
    a contradiction.
    Therefore, 
    $N$ has no sink component, 
    so $N$ is robustly orientable by Theorem~\ref{t:no.sinks}. 
\end{proof}

Theorem~\ref{t:level2robust} is best possible, 
in the sense that the bound $k\leq 2$ is tight. 
This is demonstrated by the network depicted in Figure~\ref{fig: Simple Sink Component}, 
which is level-$3$ but is not robustly orientable since it contains a sink component.

Finally, we characterise the structure of level-$1$ unrooted networks. 
It is well-known that an unrooted phylogenetic network is level-$k$ for $k \le 1$ if and only if 
every non-trivial biconnected component is a simple cycle (see e.g.,~\cite[Observation 16]{hellmuth2023clustering}); 
we state the following equivalent result.

\begin{Theorem}\label{t:level1.structure}
    An unrooted binary phylogenetic network $N$ is level-$k$ for $k \le 1$ if and only if 
    every decorated blob of $N$ consists of a simple cycle $C$ 
    together with a single leaf affixed to each vertex of~$C$.
\end{Theorem}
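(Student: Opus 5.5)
The plan is to work with the nullity characterisation noted after the definition of level-$k$: $N$ is level-$k$ for $k\le 1$ exactly when every biconnected component $B$ of $N$ has $n(B)=|E(B)|-|V(B)|+1\le 1$. Here we read ``a single leaf affixed to each vertex of $C$'' as saying that each vertex of $C$ is incident with exactly one edge of the decorated blob outside $C$, and that edge is a cut-edge. Before either direction, I would record two basic facts about blobs.

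\emph{First fact: every blob contains a cycle, so $n(B)\ge 1$.} Since $B$ is not a single vertex, it has an edge. Removing any edge of $B$ leaves $B$ connected, so every edge of $B$ lies on a cycle. As $N$ is simple, that cycle is a simple cycle of length at least $3$.

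\emph{Second fact: distinct blobs share no vertex, and any edge at a blob vertex lying outside the blob is a cut-edge.} This is the same degree-$3$ argument used in the proof of Theorem~\ref{t:no.sinks}. A vertex lying in two blobs would have degree at least $4$. An edge not lying in any blob is a cut-edge, since otherwise it would lie on a cycle and hence in some blob.

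For the forward direction, suppose $N$ is level-$k$ with $k\le 1$, and let $B$ be any blob. By the first fact, $n(B)=1$. So $B$ is connected with $|E(B)|=|V(B)|$, i.e.\ unicyclic, with unique cycle $C$. Any edge of $B$ not on $C$ would lie on no cycle of $B$. Removing it would then disconnect $B$, contradicting biconnectivity. Hence $B=C$, which is a simple cycle.

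Each vertex of $C$ has degree $2$ in $C$ and degree $3$ in $N$. It cannot be a leaf of $N$, since leaves have degree $1$. So it has exactly one further incident edge. That edge is not in $B$, and by the second fact it is a cut-edge. These cut-edges are pairwise distinct: if one cut-edge joined two vertices of $C$, it would be a chord of $C$, and a chord lies on a cycle and so is not a cut-edge. Therefore the decorated blob of $B$ is precisely $C$ with one pendant edge attached at each vertex, as required.

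For the converse, suppose every decorated blob has the stated form. Then every blob is a simple cycle $C$, so $n(B)=|C|-|C|+1=1$. Deleting one edge from each blob destroys every simple cycle, because every simple cycle of $N$ lies inside a single blob. So $N$ is level-$1$ if it has at least one blob, and level-$0$ if it has none; in the level-$0$ case the condition holds vacuously.

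I expect the main obstacle to be the step showing that a blob of nullity $1$ equals its cycle. This needs a careful use of the paper's edge-deletion definition of biconnected component. I would handle it via the unicyclic argument above: an edge of $B$ off $C$ would be a bridge of $B$. The remaining steps, including the exclusion of chords, are routine degree counting.
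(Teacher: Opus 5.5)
Your proof is correct, and it takes a different route from the paper, which gives no argument of its own. The paper cites the well-known observation of Hellmuth et al.\ that a network is level-$k$ for $k\le 1$ exactly when every non-trivial blob is a simple cycle. It then simply asserts that the decorated-blob formulation is equivalent.

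You instead prove the cited fact directly from the cyclomatic number:
\begin{enumerate}
\item Every blob has $n(B)\ge 1$, because each of its edges lies on a cycle.
\item So level at most $1$ forces $n(B)=1$.
\item A unicyclic blob must equal its cycle, since any other edge would be a bridge of $B$.
\end{enumerate}
You then make explicit the translation step the paper leaves unstated, which is exactly where the binary hypothesis enters. Each cycle vertex has degree $3$, hence exactly one further edge, and that edge is a cut-edge rather than a chord.

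The citation route is shorter and ties the result to the literature. Yours is self-contained and shows precisely where binarity and the paper's edge-deletion definition of a blob are used. Your reading of `leaf' as a degree-$1$ vertex of the decorated blob, rather than a leaf of $N$, is the right one. It matches how the paper counts vertices of decorated blobs in the proof of Lemma~\ref{l:lonely.friends}. Under the other reading the statement fails: take two cycles joined by a cut-edge, whose far endpoint is then a cycle vertex rather than a leaf of $N$.

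One small addition is needed to get a \emph{single} leaf at each vertex of $C$. Pendant cut-edges at two distinct vertices $c_1,c_2$ of $C$ must not share their outer endpoint $x\notin C$, since otherwise $x$ would have degree $2$ in the decorated blob. This is excluded by the same argument you use for chords. The edges $c_1x$ and $xc_2$, together with the $c_2$--$c_1$ arc of $C$, form a cycle through $c_1x$, so $c_1x$ is not a cut-edge.
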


\begin{Observation}\label{o:level1.implies.stack-free.tree-child}
  Note that each level-$k$ rooted binary phylogenetic network for $k \le 1$ is stack-free and tree-child, 
  since each biconnected component contains a single reticulation vertex; 
  see also~\cite{kong2022classes}.
\end{Observation}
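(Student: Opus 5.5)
The plan is to prove the observation by understanding how a single biconnected component sits inside a rooted binary level-$1$ network, and then reading off both properties locally.

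\textbf{Step 1: every blob is a directed cycle with one source and one sink.} Let $N$ be a rooted binary level-$k$ network with $k\le 1$, and let $B$ be a blob of $N$. Suppressing the root (or using the nullity computation directly), Theorem~\ref{t:level1.structure} shows that $B$ is a simple cycle $C$. Every vertex of $C$ has exactly one incident edge outside $C$, and this edge is a cut-edge; the one exception is the root, which may lie on $C$ with no outside edge. Since the orientation of $C$ is acyclic, $C$ contains the same number $m\ge 1$ of cycle-sources (both cycle edges outgoing) and cycle-sinks (both cycle edges incoming). Each cycle-sink has in-degree $2$ and total degree $3$, so it is a reticulation whose outside cut-edge points away from $B$. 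Each cycle-source other than $\rho$ must receive its in-edge along its outside cut-edge.

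\textbf{Step 2: show $m=1$.} Suppose there were two distinct entry points of $B$ (either two cycle-sources entered by cut-edges, or $\rho\in C$ together with another such source). By Lemma~\ref{l:oneroot} there are directed paths from $\rho$ to each entry point. Combining these two paths with $B$ yields a cycle in the underlying graph that passes through a cut-edge, which is impossible. Hence each blob has a unique entry, a unique cycle-sink $r_B$, and $r_B$ is the only reticulation in $B$. I expect this to be the main obstacle: the argument must be written carefully, and the case $\rho\in C$ must be handled on its own.

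\textbf{Step 3: stack-freeness.} Vertices outside all blobs have in-degree at most $1$, so every reticulation is some $r_B$. The unique child $w$ of $r_B$ is reached along a cut-edge. If $w$ lies in no blob, it has in-degree $1$. If $w$ lies on another cycle $C'$, then $w$ cannot be the cycle-sink of $C'$, since that would give it in-degree $3$. So in every case $w$ has in-degree $1$, and $w$ is a tree vertex. Hence $N$ is stack-free.

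\textbf{Step 4: tree-child.} Let $u$ be an internal vertex or $\rho$. If $u$ is a reticulation, Step 3 already gives it a tree child. So suppose $u$ is a tree vertex whose two children $c_1,c_2$ are both reticulations. If some edge $(u,c_i)$ were a cut-edge, then $c_i$ would enter through a cut-edge. But the in-edges of a cycle-sink are cycle edges, so $c_i$ would have in-degree $1$, a contradiction. Therefore both out-edges of $u$ are cycle edges of the same cycle $C$. By Step 2, this forces $c_1=c_2=r_B$, so $C$ has length $2$. That means $N$ has parallel edges, contradicting simplicity. Hence $N$ is tree-child.
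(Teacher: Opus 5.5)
Your proposal is correct and follows the same idea as the paper. The paper only asserts that every blob of a level-$k$ network with $k\le 1$ contains a single reticulation (citing Kong et al.) and reads off both properties. You supply the missing justification (each blob is a cycle with a unique entry, hence a unique cycle-sink) and then check stack-freeness and tree-child locally.

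One point to tighten in Step~2: an ``entry'' must mean any vertex of $B$ whose outside cut-edge points into $B$, not only a cycle-source. Otherwise a pass-through cycle vertex with an incoming cut-edge would be a second reticulation in $B$. Your two-path argument covers this case verbatim. Theorem~\ref{t:oriented_away} makes it cleaner: every incoming cut-edge at $B$ must separate $\rho$ from $B$, and two distinct cut-edges at $B$ cannot both do so, nor can one do so when $\rho\in B$.
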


\section{Robustly tree-child and robustly stack-free networks}

We now consider robustly tree-child and robustly stack-free networks. 
We will first show that all level-$k$ unrooted networks for $k \le 1$ are in these classes; 
later, we will show that these three classes in fact coincide.

\begin{Theorem}\label{t:yes.level1}
    Every level-$k$ robustly orientable network $N$ for $k \le 1$
    is robustly tree-child and robustly stack-free.
\end{Theorem}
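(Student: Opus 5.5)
Fix a level-$k$ network $N$ with $k\le 1$, an edge $e\in E(N)$, and an arbitrary phylogenetic orientation $o=o(N,e)$. The plan is to show that $o$ is itself a rooted binary phylogenetic network of level at most $1$, and then apply Observation~\ref{o:level1.implies.stack-free.tree-child}. Since $e$ and $o$ are arbitrary, this gives that $N$ is robustly tree-child and robustly stack-free. The case $k=0$ is immediate, since then $o$ is a rooted tree with no reticulations. So I assume $k=1$.

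First, $o$ is a binary rooted phylogenetic network. Every vertex of $N$ keeps its degree after subdivision, so internal vertices have degree $3$. The new vertex $\rho$ has degree $2$, and it has in-degree $0$ because it is the unique root, so its out-degree is $2$. The remaining axioms hold by the definition of a phylogenetic orientation.

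Next I compare the blob structure of $o$ with that of $N$. Subdividing $e$ only replaces an edge by a path of length two. Hence the biconnected components of the underlying graph of $o$ are those of $N$, except that if $e$ lies in a blob $S$, that blob becomes $S$ with $e$ subdivided. If instead $e$ is a cut-edge, it becomes two cut-edges. Nullity is unchanged by subdivision, since the new vertex adds one vertex and one edge. Therefore every blob of $o$ has nullity at most $1$, so the underlying network of $o$ is level-$1$, and by definition $o$ is level-$1$.

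The main obstacle is that Observation~\ref{o:level1.implies.stack-free.tree-child} is stated for rooted networks, and the root now lies in a cycle, which may look non-standard. I would therefore check the key claim directly. By Theorem~\ref{t:level1.structure}, every blob of $o$ is a simple cycle $C$, possibly containing $\rho$. Take a reticulation $r$ on $C$. Its two in-edges are both edges of $C$, because every edge leaving the cycle is a cut-edge, and by Theorem~\ref{t:oriented_away} such an edge is oriented away from $\rho$; by Lemma~\ref{l:oneroot} it therefore points outward from $C$ unless $C$ is entered through it. So the third edge at $r$ is its out-edge, and it is a cut-edge, namely a pendant leaf edge or a cut-edge to a subsequent blob. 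The child $c$ of $r$ reached along this edge lies outside $C$ and has this cut-edge as its unique in-edge, since any second in-edge would have to come from another blob entered at $c$, and that edge would also point away from $\rho$, hence out of $c$. So $c$ is a tree vertex. Consequently every reticulation has only tree children, which is stack-freeness.

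For tree-childness, consider a tree vertex $v$. If $v$ is off every cycle, or if $v$ lies on a cycle and its cut-edge points outward, then that child is a tree vertex by the same argument. Otherwise $v$ lies on $C$ and its out-edges are both cycle edges. Since $C$ has exactly one reticulation (one vertex with both cycle edges incoming, as the cycle is acyclically oriented with a single source), at most one of these children is the reticulation. The other child is therefore a tree vertex, which gives tree-childness.
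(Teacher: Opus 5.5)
Your main line is the paper's proof. The paper simply applies Observation~\ref{o:level1.implies.stack-free.tree-child} to the orientation, and your checks fill in what it leaves implicit: $o$ is a rooted binary network, and subdivision preserves nullity, so $o$ has level at most $1$. Your worry about $\rho$ lying on a cycle is unfounded. The Observation concerns arbitrary rooted binary networks, whose root routinely lies in a blob, so the proof is already complete at that point.

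The direct verification you offer in place of the Observation, however, has a flawed step.

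In the stack-free paragraph, suppose the child $c$ of the reticulation $r$ lies on another cycle $C'$. Then the other two edges at $c$ are edges of $C'$, not cut-edges. Theorem~\ref{t:oriented_away} says nothing about them, and ``pointing away from $\rho$'' is not defined for them. Likewise, you do not argue that both in-edges of $r$ are cycle edges when $r$ is the vertex through which $C$ is entered. There, the cut-edge points into $r$, so $r$ could a priori have in-edges consisting of that cut-edge and one cycle edge.

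The missing fact is the one you only assert in the tree-child paragraph. In $o$, each cycle $C$ has a unique source with respect to its cycle edges: $\rho$ if $\rho\in C$, and otherwise the endpoint $s$ of the unique cut-edge directed into $C$. The reason is that every other vertex of $C$ has its cut-edge directed out of $C$. If both its cycle edges were also directed outward, it would be a second vertex of in-degree $0$.

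Sources and sinks alternate around an acyclically oriented cycle, so $C$ then has exactly one sink, which is its only reticulation. Also, $s$ has in-degree $1$ and is a tree vertex. This single fact settles both the entry-vertex case and the case where $c$ lies on a cycle. It is exactly the content behind the paper's remark that each blob contains a single reticulation.
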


\begin{proof}
    As $N$ is robustly orientable, any choice of root admits a phylogenetic orientation. 
    By Observation~\ref{o:level1.implies.stack-free.tree-child}, 
    the resulting network is stack-free and tree-child.
\end{proof}

Our general strategy below will be to adjust an unrooted network slightly by performing surgery on an area 
so that we can then orient it so as to create a forbidden substructure, 
and then transform the now rooted network back in a way that 
preserves the phylogenetic orientation and has the original network as the underlying undirected network. 
In order to construct such a forbidden substructure for the present classes, 
we first identify a type of vertex that will be useful.

\begin{Definition}
    A vertex with no incident cut-edge is referred to as \emph{lonely}. 
\end{Definition}

We also need a simple lemma to aid us in a counting argument for the main proofs.

\begin{Lemma}\label{l:level.B.D.equal}
    Let $N$ be an unrooted phylogenetic network, $B$ a biconnected component of $N$, 
    and $D$ the decorated blob obtained by taking the union of $B$ with its incident cut-edges. 
    Then the level of $B$ is equal to the level of $D$.
\end{Lemma}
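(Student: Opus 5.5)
The plan is to compute the cyclomatic number of $D$ directly and compare it with that of $B$. The level of a graph is measured by its nullity $n(H)=|E(H)|-|V(H)|+1$, equivalently the minimum number of edge deletions needed to leave no simple cycle. So the claim reduces to showing $n(D)=n(B)$.

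First, describe $D$ precisely. Let $C(B)=\{c_1,\dots,c_m\}$ be the cut-edges incident with $B$. Each $c_i=\{u_i,w_i\}$ has exactly one endpoint $u_i$ in $B$. Both endpoints cannot lie in $B$: otherwise $B\cup\{c_i\}$ would be a strictly larger subgraph that stays connected after deleting any edge, contradicting the maximality of $B$, or equivalently contradicting that $c_i$ is a cut-edge. Moreover, the far endpoints $w_i$ are pairwise distinct and lie outside $B$. If some $w_i$ lay in $B$, or if $w_i=w_j$ for $i\neq j$, then $c_i$ (with $c_j$, or with a path through $B$) would lie on a cycle. That contradicts $c_i$ being a cut-edge, since an edge lies on a cycle exactly when it is not a cut-edge.

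Second, count. Each cut-edge adds one edge and one new vertex, so $|E(D)|=|E(B)|+m$ and $|V(D)|=|V(B)|+m$. Hence $n(D)=|E(D)|-|V(D)|+1=n(B)$.

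To tie this explicitly to the edge-deletion definition of level, note that every simple cycle of $D$ lies in $B$. A cycle cannot use any $c_i$, because $c_i$ is a cut-edge and the vertex $w_i$ has degree one in $D$. Therefore a set of edges whose deletion destroys all cycles of $D$ may be taken inside $B$, and conversely. So the minimum number of deletions required is the same for $B$ and for $D$.

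The only step needing care is the distinctness of the $w_i$ and the fact that they lie outside $B$. The argument above handles this via the cut-edge property, and everything else is bookkeeping.
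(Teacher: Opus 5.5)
Your proof is correct and is essentially the paper's argument: both compare nullities and note that adjoining the $m$ incident cut-edges adds exactly $m$ edges and $m$ vertices, so $|E(D)|-|V(D)|+1=|E(B)|-|V(B)|+1$. Your additional check that the far endpoints $w_i$ are pairwise distinct and lie outside $B$ (via the cut-edge property and simplicity of $N$) makes explicit the fact that the paper's vertex count relies on without stating it.
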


\begin{proof}
    It suffices to observe that, 
    if we adjoin $m$ cut-edges, then $m$ edges and $m$ vertices are added, 
    and so $|E(D)|-|V(D)|+1=|E(B)|+m-(|V(B)|+m)+1=|E(B)|-|V(B)|+1$, 
    leaving the level unchanged.
\end{proof}

We will now show that a sufficiently complex unrooted network must contain a lonely vertex, 
which we will later modify in order to create our forbidden substructures.

\begin{Lemma}\label{l:lonely.friends}
    Let $N$ be a level-$k$ binary unrooted phylogenetic network for $k \ge 2$. 
    Then there exists a lonely vertex in $N$ that is adjacent to a non-leaf vertex that is not lonely.
\end{Lemma}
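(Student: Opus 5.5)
Since $N$ is level-$k$ with $k\ge 2$, some biconnected component $B$ has nullity $n(B)=|E(B)|-|V(B)|+1=k\ge 2$. The plan is to find the required lonely vertex inside $B$, working in two steps: first show that $B$ contains a lonely vertex, then show that some lonely vertex of $B$ has a non-lonely neighbour in $B$. All non-leaf vertices of $N$ have degree $3$, and every vertex of $B$ is internal, since a leaf has degree $1$ and cannot lie on a cycle. Hence each vertex of $B$ has degree $2$ or $3$ within $B$: it has degree $2$ exactly when it is incident with one cut-edge, and degree $3$ exactly when it is lonely. (No vertex of $B$ is incident with two cut-edges, because it needs at least two edges in $B$.)

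\textbf{Step 1: $B$ contains a lonely vertex.} Let $\ell$ be the number of lonely vertices of $B$ and $m$ the number of non-lonely ones. The handshaking lemma gives $2|E(B)| = 3\ell + 2m$. Therefore
\[
 n(B) = |E(B)| - |V(B)| + 1 = \tfrac{1}{2}(3\ell+2m) - (\ell+m) + 1 = \tfrac{\ell}{2}+1 .
\]
Since $n(B)\ge 2$, we get $\ell = 2(k-1)\ge 2$. So $B$ has at least two lonely vertices. The same count could be phrased on the decorated blob via Lemma~\ref{l:level.B.D.equal}, but working directly with $B$ is cleaner.

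\textbf{Step 2: some lonely vertex is adjacent to a non-lonely internal vertex.} First, $B\ne N$, because $N$ contains a leaf and leaves are not in $B$. Since $N$ is connected, $B$ must then be incident with at least one cut-edge, so $m\ge 1$. Every vertex of $B$ is internal, so any non-lonely vertex of $B$ is a non-leaf vertex that is not lonely. Now $B$ is a connected graph whose vertex set is split into the nonempty set of lonely vertices ($\ell\ge 2$) and the nonempty set of non-lonely vertices ($m\ge 1$). By connectivity of $B$, some edge of $B$ joins the two sets. Its lonely endpoint is the vertex we want.

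The only step needing care is the bookkeeping in Step 1: every vertex of $B$ must have $B$-degree exactly $2$ or $3$, so the leaf and double-cut-edge cases must be excluded as above. The rest is immediate.
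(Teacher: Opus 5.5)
Your proof is correct and follows the paper's route: pick a blob $B$ of nullity $k$, use the handshaking lemma to force a lonely vertex, then use connectivity to find a lonely vertex adjacent to a non-lonely internal vertex. Where you differ, it is a tidier version of the same argument. Your direct count on $B$ gives exactly $2(k-1)$ lonely vertices, replacing the paper's contradiction argument on the decorated blob $D$. Running the connectivity step inside $B$ makes the non-leaf condition automatic, which the paper leaves implicit.
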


\begin{proof}
    As $N$ is level-$k$, there exists a biconnected component $B$ in $N$ for which $k=|E(B)|-|V(B)|+1$. 
    We first show that there exists at least one lonely vertex in $B$ and thus in $N$. 
    By Lemma~\ref{l:level.B.D.equal}, 
    the decorated blob $D$ obtained by adjoining the incident cut-edges of $B$ is also level-$k$. 
    
    Assume that there are no lonely vertices in $D$. 
    Then each vertex $v$ in $B$ is incident with at least one cut-edge,
    As $B$ is a biconnected component, it contains no cut-edge, 
    so $v$ is incident with exactly one cut-edge. 
    This follows from the fact that $N$ is binary, 
    so if $v$ were adjacent to $3$ cut-edges, 
    then $N$ would be a single vertex attached to $3$ cut-edges and so would be level-$0$, 
    and if it were attached to $2$ cut-edges, then its third edge would also be a cut-edge, 
    contradicting the fact that $B$ is a biconnected component. 
    
    Then $|I|=|L|$ where $I$ is the set of vertices of $B$ 
    and where $L$ is the set of vertices in the cut-edges that are not contained in $B$ by $L$.
    Therefore, $|V(D)| = |I|+|L| = 2|I|$. 
    In~$D$, vertices in $I$ are of degree $3$ and vertices in $L$ are of degree $1$, 
    so by the Handshaking Lemma,
    $|E(D)| = (3|I|+|L|)/2 = 4|I|/2 = 2|I| = |V(D)|$, so $k=|E(D)|-|V(D)|-1=1$, 
    which contradicts the fact that $k \ge 2$. 
    It follows that there is at least one lonely vertex.

    Finally, as $N$ is a phylogenetic network, it also contains at least one leaf, 
    and therefore, since $N$ is connected and has more than one edge, 
    a non-leaf vertex that is not lonely. 
    Since $N$ is connected, 
    some non-lonely internal vertex must be adjacent to one of the lonely vertices.
\end{proof}

We can now prove a lemma that allows us to perform surgery on part of a valid orientation 
without disrupting its validity.

\begin{Lemma} \label{l:leaf.fusion}
    Let $N$ be a robustly orientable phylogenetic network with at least $2$ leaves, 
    and let $o(N,e)$ be a phylogenetic orientation on $N$. 
    Construct the network $N^*$ by identifying two leaves in $N$ 
    and adjoining a new leaf to the resultant degree-$2$ vertex $v$.
    Define the orientation $o^*(N^*,e)$ on $N^*$ to be the same as $o(N,e)$ except for the new leaf, 
    which is directed towards~$v$. 
    Then $N^*$ is an unrooted phylogenetic network and $o^*(N^*,e)$ is a phylogenetic orientation.
\end{Lemma}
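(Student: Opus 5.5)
We verify the two claims directly from the definitions: first that $N^*$ is an unrooted phylogenetic network, then that $o^*(N^*,e)$ satisfies each condition for a phylogenetic orientation. Let $\ell_1,\ell_2$ be the two identified leaves, with neighbours $a_1,a_2$, and let $\ell$ be the new leaf. In $N^*$ the vertex $v$ is adjacent to exactly $a_1,a_2,\ell$.

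For $N^*$, finiteness and connectivity are inherited, since identifying vertices of a connected graph and attaching a pendant vertex preserve connectivity. $N^*$ has the leaf $\ell$, and $v$ has degree $3$, so binarity is preserved if $N$ was binary. Simplicity is the one point needing care: we need $a_1\neq a_2$, otherwise $v$ would carry a parallel pair of edges. I would handle this as follows. If $a_1=a_2$, the edges $\{a_1,\ell_1\},\{a_1,\ell_2\}$ are both cut-edges, so the third edge at $a_1$ is a cut-edge or $N$ is a claw. Either the construction implicitly picks leaves with distinct neighbours, or we restrict to that case, which is the case used later (e.g.\ with Lemma~\ref{l:lonely.friends}). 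I expect this to be the main obstacle, since it is the only place where the hypothesis of the statement may need to be made precise. No loops arise, because $\ell_1,\ell_2$ are not adjacent when $N$ has more than one edge.

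For the orientation, recall Theorem~\ref{t:oriented_away}: in $o(N,e)$ the pendant edges at $\ell_1,\ell_2$ are oriented $(a_1,\ell_1)$ and $(a_2,\ell_2)$, since $\ell_i$ has out-degree $0$ and cannot be the root. Hence in $o^*$ the vertex $v$ has in-edges $(a_1,v),(a_2,v)$ and out-edge $(v,\ell)$, so $v$ is a reticulation. The statement says the new leaf's edge is ``directed towards $v$''; I would read this as the edge between $v$ and $\ell$ being oriented $(v,\ell)$, since the reverse would create a second root. The remaining checks are as follows.

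\emph{Unique root.} In-degrees of all old vertices are unchanged, $v$ has in-degree $2$ and $\ell$ has in-degree $1$, so $\rho$ is still the unique vertex of in-degree $0$. The edge $e$ is untouched unless $e$ is a leaf edge, in which case we choose the other leaf or note that $e$ is preserved.

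\emph{Leaves are the only sinks.} Out-degrees of old internal vertices are unchanged, $v$ has out-degree $1$, and the only sink added is the leaf $\ell$.

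\emph{Acyclicity.} Any directed cycle through $v$ would have to leave $v$ via $(v,\ell)$, but $\ell$ is a sink. A cycle avoiding $v$ and $\ell$ lies in $o(N,e)$, which is acyclic.

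This establishes that $o^*(N^*,e)$ is a phylogenetic orientation.
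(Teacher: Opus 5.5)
Your proof is correct and follows the paper's own argument. Both count in- and out-degrees to show the root is unique and the sinks are exactly the leaves. Both note that any directed cycle through the fused vertex $v$ would have to leave along $(v,\ell)$ into a sink, while every other edge is inherited from the acyclic $o(N,e)$.

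Your caveat about simplicity is well founded, and the paper passes over it: its proof asserts $N^*$ is a phylogenetic network merely because it has a leaf. If the two identified leaves form a cherry ($a_1=a_2$), or if $N$ is a single edge, then $N^*$ is not simple. So the statement needs the extra hypothesis $a_1\neq a_2$. That hypothesis does hold in the lemma's only application, Theorem~\ref{t:not.level2}, where $\ell_1$ and $\ell_2$ are attached to the distinct vertices $x\neq y$.
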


\begin{proof}
    By definition, $N^*$ is a phylogenetic network, as it still has at least one leaf, 
    so it remains to show that $o^*(N^*,e)$ is a phylogenetic orientation.
    
    As we identified two leaves and $o$ was a phylogenetic orientation, 
    in $o^*(N,e)$, the two old edges still point towards the identified vertex, 
    and the third edge points out to the leaf. 
    Now, $o^*(N^*,e)$ still has at least one leaf, 
    the new leaf is a sink and so the sinks are still precisely the leaves, 
    and no new root was added. 
    Assume that $o^*(N^*,e)$ contains some directed cycle.  
    Since $o(N,e)$ is acyclic, 
    the directed cycle must contain any of the two former leaf edges, or the new leaf edge. 
    But any path that uses any of those three edges must end either at the identified vertex or the new leaf, 
    and hence cannot be a cycle, a contradiction.
    Therefore, $o^*(N^*,e)$ is acyclic, and we have proved that it is a phylogenetic orientation.
\end{proof}

Finally, we can now show that unrooted networks with level at least $2$ can always 
admit a valid phylogenetic orientation that is not tree-child or stack-free.

\begin{figure}[ht]
    \centering
\begin{tikzpicture}[scale=0.9]
  \begin{scope}[shift={(-6,0)}]
    %\draw[help lines] (-5,-5) grid (5,5);
    \foreach \angle [count = \i] in {30, 90, ..., 330} {
        \fill (\angle:2cm) circle (2.5pt) coordinate (dot\i);
        %\node at (\angle: 2.4) {Dot \i};
    }
    \node at (270:3.3cm) {$(a)$};

    \node at (270:2.3cm) {$v$};
    \node at (218:2.2cm) {$w$};
    \node at (210:3.8cm) {$\ell$};
    \fill (210:3.5cm) circle (2.5pt) coordinate (leaf1);

    \draw [thick]  (dot2) -- (dot1);
    \draw [thick]  (dot3) -- (dot2);
    \draw [thick]  (dot3) -- (dot4);
    \draw [thick]  (dot1) -- (dot6);
    \draw [thick]  (dot6) -- (dot5);
    \draw [thick]  (dot5) -- (dot4);
    \draw [thick]  (dot5) -- (dot2);
    \draw [thick]  (dot4) -- (leaf1);
\end{scope}
\begin{scope}[shift={(0,0)}]
    %\draw[help lines] (-5,-5) grid (5,5);
    \foreach \angle [count = \i] in {30, 90, ..., 330} {
        \fill (\angle:2cm) circle (2.5pt) coordinate (dot\i);
        %\node at (\angle: 2.4) {Dot \i};
    }
    \node at (270:3.3cm) {$(b)$};
    
    \node at (270:2.3cm) {$\ell_2$};
    \node at (218:2.2cm) {$w$};
    \node at (210:3.8cm) {$\ell$};
    \node at (0:0.3cm) {$\ell_1$};
    \fill (210:3.5cm) circle (2.5pt) coordinate (leaf1);
    \fill (0:0cm) circle (2.5pt) coordinate (leaf2);

    \draw [thick]  (dot2) -- (dot1);
    \draw [thick]  (dot3) -- (dot2);
    \draw [thick]  (dot3) -- (dot4);
    \draw [thick]  (dot1) -- (dot6);
    \draw [thick]  (dot6) -- (dot5);
    \draw [thick]  (leaf2) -- (dot2);
    \draw [thick]  (dot4) -- (leaf1);
\end{scope}
\begin{scope}[shift={(6,0)}]
    %\draw[help lines] (-5,-5) grid (5,5);
    \foreach \angle [count = \i] in {30, 90, ..., 330} {
        \fill (\angle:2cm) circle (2.5pt) coordinate (dot\i);
        %\node at (\angle: 2.4) {Dot \i};
    }
    \node at (270:3.3cm) {$(c)$};
    
    \node at (270:2.3cm) {$\ell_2$};
    \node at (218:2.2cm) {$w$};
    \node at (210:3.8cm) {$\ell$};
    \node at (0:0.3cm) {$\ell_1$};
    \node at (120:2.3cm) {$\rho$};
    \fill (120:1.75cm) circle (2.5pt) coordinate (root);
    \fill (210:3.5cm) circle (2.5pt) coordinate (leaf1);
    \fill (0:0cm) circle (2.5pt) coordinate (leaf2);

    \draw [midarrow, thick]  (dot2) -- (dot1);
    \draw [midarrow, thick]  (root) -- (dot2);
    \draw [midarrow, thick]  (root) -- (dot3);
    \draw [midarrow, thick]  (dot3) -- (dot4);
    \draw [midarrow, thick]  (dot1) -- (dot6);
    \draw [midarrow, thick]  (dot6) -- (dot5);
    \draw [midarrow, thick]  (dot2) -- (leaf2);
    \draw [midarrow, thick]  (dot4) -- (leaf1);
\end{scope}
\begin{scope}[shift={(-3,-6)}]
    %\draw[help lines] (-5,-5) grid (5,5);
    \foreach \angle [count = \i] in {30, 90, ..., 330} {
        \fill (\angle:2cm) circle (2.5pt) coordinate (dot\i);
        %\node at (\angle: 2.4) {Dot \i};
    }
    \node at (270:3.3cm) {$(d)$};
    
    \node at (218:2.2cm) {$w$};
    \node at (210:3.8cm) {$\ell$};
    \node at (120:2.3cm) {$\rho$};
    \node at (237:3.8cm) {$\ell_3$};
    \fill (120:1.75cm) circle (2.5pt) coordinate (root);
    \fill (210:3.5cm) circle (2.5pt) coordinate (leaf1);
    \fill (240:3.5cm) circle (2.5pt) coordinate (leaf3);

    \draw [midarrow, thick]  (dot2) -- (dot1);
    \draw [midarrow, thick]  (root) -- (dot2);
    \draw [midarrow, thick]  (root) -- (dot3);
    \draw [midarrow, thick]  (dot3) -- (dot4);
    \draw [midarrow, thick]  (dot1) -- (dot6);
    \draw [midarrow, thick]  (dot6) -- (dot5);
    \draw [midarrow, thick]  (dot2) -- (dot5);
    \draw [midarrow, thick]  (dot4) -- (leaf1);
    \draw [midarrow, thick]  (dot5) -- (leaf3);
\end{scope}
\begin{scope}[shift={(3,-6)}]
    %\draw[help lines] (-5,-5) grid (5,5);
    \foreach \angle [count = \i] in {30, 90, ..., 330} {
        \fill (\angle:2cm) circle (2.5pt) coordinate (dot\i);
        %\node at (\angle: 2.4) {Dot \i};
    }
    \node at (270:3.3cm) {$(e)$};
    
    \node at (218:2.2cm) {$w$};
    \node at (210:3.8cm) {$\ell$};
    \node at (120:2.3cm) {$\rho$};
    \fill (120:1.75cm) circle (2.5pt) coordinate (root);
    \fill (210:3.5cm) circle (2.5pt) coordinate (leaf1);

    \draw [midarrow, thick]  (dot2) -- (dot1);
    \draw [midarrow, thick]  (root) -- (dot2);
    \draw [midarrow, thick]  (root) -- (dot3);
    \draw [midarrow, thick]  (dot3) -- (dot4);
    \draw [midarrow, thick]  (dot1) -- (dot6);
    \draw [midarrow, thick]  (dot6) -- (dot5);
    \draw [midarrow, thick]  (dot2) -- (dot5);
    \draw [midarrow, thick]  (dot4) -- (leaf1);
    \draw [midarrow, thick]  (dot5) -- (dot4);
\end{scope}
\end{tikzpicture}
    
    \caption{An example of our construction to show that level-$k$ networks for $k \ge 2$ are not 
    robustly tree-child/stack-free in Theorem~\ref{t:not.level2}. 
    All diagrams are restricted to a single biconnected component and the relevant leaves. 
    (a) The network $N$ with lonely vertex $v$, non-lonely vertex $w$, and leaf $\ell$; 
    (b) The network $N^*$; 
    (c) The orientation $o(N^*,e)$;
    (d) The orientation $o^*(N^{**},e)$; 
    (e) The orientation $o^{**}(N,e)$, which is an orientation on the original network $N$.}
    \label{fig:tc.construction}
\end{figure}

\begin{Theorem}\label{t:not.level2}
    Let $N$ be a level-$k$ robustly orientable phylogenetic network for $k \ge 2$. 
    Then $N$ is not robustly tree-child or robustly stack-free.
\end{Theorem}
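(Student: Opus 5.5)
We follow the surgery strategy illustrated in Figure~\ref{fig:tc.construction}. By Lemma~\ref{l:lonely.friends}, $N$ contains a lonely vertex $v$ adjacent to a non-leaf, non-lonely vertex $w$. Since $w$ is internal and not lonely, it is incident with a cut-edge; let $\ell$ be the vertex at the far end of that edge. The aim is a phylogenetic orientation of $N$ in which $v$ is a reticulation whose unique child is $w$, and $w$ is itself a reticulation. Then $v$ violates both the tree-child and the stack-free conditions. Since $N$ is robustly orientable, this single orientation for a suitable root edge already shows that $N$ is neither robustly tree-child nor robustly stack-free.

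\textbf{Step 1: a modified network.} Form $N^*$ by deleting the edge $\{v,w\}$, suppressing nothing, and attaching pendant leaves: $v$ becomes a leaf $\ell_2$ after we detach its other structure, as in panel (b). Concretely, we cut $\{v,w\}$ and replace $v$ and its edges so that $v$'s two remaining neighbours keep their edges, and we add a leaf $\ell_1$ at the degree-$2$ vertex arising from the other side. More precisely, we split $v$ into leaf ends so that $N^*$ is binary and has no sink component. Since $v$ is lonely, the blob of $v$ remains connected to the rest of the network, and $w$ still has its cut-edge towards $\ell$. Choose the root edge $e$ to be an edge incident with the blob but not incident with $v$ or $w$. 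Theorem~\ref{t:no.sinks} then gives a phylogenetic orientation $o(N^*,e)$, provided we verify that $N^*$ has no sink components. I expect this verification to be the main obstacle. The first thing to try is to check that each newly created blob contains one of $\ell_1$, $\ell_2$, or the cut-edge at $w$ on top of its original attachment. If that fails, we choose $e$ directly and hand-build an $st$-orientation via Theorem~\ref{ro:st_orientation}, taking $t$ to be the vertex attached to the new leaves.

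\textbf{Step 2: fusing leaves.} Apply Lemma~\ref{l:leaf.fusion} to identify the two new pendant leaves that sit where $v$ was, giving $N^{**}$ with orientation $o^*(N^{**},e)$. In this orientation the identified vertex $v$ has in-degree $2$ and a single outgoing pendant edge to $\ell_3$. This step is purely an invocation of the lemma.

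\textbf{Step 3: redirecting the pendant.} Replace the pendant edge $v\to\ell_3$ by the edge $(v,w)$, which recovers $N$ as the underlying graph. Then $w$ gains an in-edge, and we must delete its pendant role appropriately so that the underlying graph is exactly $N$. We check acyclicity: any new cycle must pass through $(v,w)$, so there would have to be a directed path from $w$ back to $v$. We rule this out by arranging in Step 1 that $w$ lies on the sink side of the $st$-orientation, reaching only $\ell$ and descendants outside the blob. Unique root and sinks-are-leaves are unaffected. In the result, $v$ is a reticulation with only child $w$, and $w$ has in-degree $2$, so $w$ is a reticulation. This violates stack-freeness at $v$, and also violates the tree-child condition, since $v$'s only child is a reticulation. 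This gives the theorem.
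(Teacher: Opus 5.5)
Your plan matches the paper's: delete the lonely vertex $v$, hang new leaves on its two neighbours $x,y\neq w$, and orient the resulting $N^*$ via Theorem~\ref{t:no.sinks}. Then fuse the two new leaves with Lemma~\ref{l:leaf.fusion} and reattach the fused vertex to $w$ to create the stack $v\to w$.

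\textbf{The gap.} The one step with real content, that $N^*$ has no sink component, is not proved. You flag it as the main obstacle and only say what you would try.

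Your proposed check (each new blob is incident with the edge to $\ell_1$ or $\ell_2$, or with the cut-edge at $w$) does not cover all blobs. Suppose $x$ and $y$ lie in different blobs of $B\setminus\{v\}$, these are joined through a third blob $C$ by two new bridges, and the third neighbour $u$ of $w$ is not in $C$. Then $C$ meets none of those edges and may carry no cut-edge of $N$ at all. It fails to be a sink only because of the two bridges.

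Your fallback of hand-building an $st$-orientation is not spelled out. In any case, it cannot repair a sink component of $N^*$ that avoids the root.

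\textbf{The missing idea} is the paper's Menger-type argument:
\begin{enumerate}
\item Take any vertex $z$ of a blob $C$ of $N^*$ lying inside $B\setminus\{v\}$. Biconnectivity of $B$ gives two internally disjoint $v$--$z$ paths.
\item These paths leave $v$ through two distinct vertices among $w,x,y$.
\item Replace each deleted first edge by $\{\ell,w\}$ or by the new pendant edge at $x$ or $y$. Each path now begins with a cut-edge of $N^*$.
\item The last cut-edge on each path is incident with $C$, and disjointness makes these two cut-edges distinct. Hence $C$ is not a sink component.
\end{enumerate}

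\textbf{Smaller repairs.} Your description of $N^*$ is inconsistent in three ways:
\begin{enumerate}
\item you speak of deleting the edge $\{v,w\}$ but also of deleting $v$;
\item an extra leaf apparently appears at $w$, whose ``pendant role'' you later delete;
\item you claim $N^*$ is binary, whereas in the clean construction $w$ has degree $2$ in $N^*$.
\end{enumerate}
Simply delete $v$ and attach one new leaf to each of $x$ and $y$.

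The acyclicity check in Step~3 is deferred to an ``arrangement'' in Step~1 that is never made. The actual reason is as follows. In $N^*$ both edges at $w$ are cut-edges. With the root on $B$'s side, which your choice of $e$ does guarantee, they are oriented $u\to w\to\ell$. After adding $(v,w)$, the only exit from $w$ is across $\{w,\ell\}$, which points away from the root. So no directed path returns to $v$, and no cycle through $(v,w)$ exists.
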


\begin{proof} 
    By Lemma~\ref{l:lonely.friends}, 
    there exists in $N$ a lonely vertex $v$ adjacent to a non-lonely internal vertex $w$.
    Let $\ell$ be the leaf adjacent to $w$, 
    and denote the component containing $v$ by~$B$.

    We will perform a series of operations to modify $N$ and give it an orientation, 
    and then modify the underlying network back to~$N$. 
    We depict the operations in Figure~\ref{fig:tc.construction}, 
    and an example network $N$ is depicted in Figure~\ref{fig:tc.construction} (a).   

    Construct a network $N^*$ by deleting $v$ and all incident edges 
    and then adjoining leaves $\ell_1$ and $\ell_2$ to the vertices $x,y\neq w$ 
    that were adjacent to $v$. 
    This is depicted in Figure~\ref{fig:tc.construction} (b).

    We claim that $N^*$ is also robustly orientable; 
    that is, by Theorem~\ref{t:no.sinks}, that it contains no sink component. 
    Let $N^*|_B$ denote the subgraph of $N^*$ induced by the vertices in $B\setminus\{v\}$. 
    Observe that no component outside of $N^*|_B$ is a sink component, 
    as we have not deleted or added any cut-edge outside $N^*|_B$. 
    Consider any biconnected component of $N^*|_B$, say $C$. 
    Select any vertex in $C$, say $z$. 
    Since $B$ was biconnected, 
    it contained a pair of disjoint simple paths from $v$ to $z$. 
    Since $v$ had degree $3$, the paths must pass through two of $w$, $x$ and $y$. 
    Deleting $v$ removes only the initial edge of each path,
    so each path remains a simple path from one of $w$, $x$ and $y$ to $z$. 
    Since the original paths were internally vertex-disjoint, 
    these shortened paths remain internally vertex-disjoint. 
    Adjoin to each path at the beginning the adjacent leaf, 
    namely the leaf adjacent to $w$ if the path starts at $w$, 
    $\ell_1$ if it starts at $x$ and $\ell_2$ if it starts at $y$. 
    We now have a pair of disjoint paths from a leaf to $z$, 
    which we shall refer to as $P_1$ and $P_2$.

    Consider the last cut-edge on each of $P_1$ and $P_2$; 
    such must exist, as each path begins with a leaf edge, which is necessarily a cut-edge. 
    Every edge on the path after this final cut-edge is contained in 
    the same biconnected component as $z$, namely $C$, 
    and so $C$ has at least two incident cut-edges and is hence not a sink component. 
    As $C$ was selected arbitrarily, 
    $N^*|_B$ contains no sink component, 
    and $N^*$ is robustly orientable. 

    We now construct an orientation that is not tree-child. 
    Choose an edge $e$ that is not one of the new edges or the leaf edge $\{w,\ell\}$. 
    Since $N^*$ is robustly orientable, there is a phylogenetic orientation $o(N^*,e)$. 
    Observe in particular, that as $o(N^*,e)$ is phylogenetic, 
    $\ell_1$ and $\ell_2$ are sinks and so have in-degree one, and similarly for $\ell$. 
    Indeed, as $w$ has degree $2$ and is not a root or a leaf, 
    it must have in-degree and out-degree $1$. 
    An example orientation is depicted in Figure~\ref{fig:tc.construction} (c).

    We now apply the construction from Lemma~\ref{l:leaf.fusion} to obtain 
    a new network $N^{**}$ with phylogenetic orientation $o^{*}$, 
    where $\ell_1$ and $\ell_2$ are identified and a new leaf $\ell_3$ is added adjacent to $\ell_1$, 
    so that $\ell_1$ has in-degree-$2$ and out-degree one, and so is a reticulation vertex. 
    This is depicted in Figure~\ref{fig:tc.construction} (d). 

    Finally, we identify $\ell_3$ and $w$ to obtain a network isomorphic to $N$ 
    (which we will unambiguously call $N$), 
    with inherited orientation $o^{**}$. 
    This is depicted in Figure~\ref{fig:tc.construction}~(e).
    
    In particular, $w$ now has in-degree $2$ and out-degree $1$ and is thus a reticulation vertex. 
    Certainly, $o^{**}$ has at least one leaf, 
    all leaves are sinks, and there is still only one root. 
    As $o^{*}$ was acyclic, any cycle in $o^{**}$ must pass through $(\ell_1,w)$. 
    But any path through this edge must either terminate at $w$ or $\ell$, so it cannot form a cycle. 
    Thus, $o^{**}$ is acyclic and thus a phylogenetic orientation. 

    Finally, we observe that $\ell_1$ and $w$ are both reticulation vertices, and hence form a stack, 
    so $o^{**}$ is a phylogenetic orientation that is not tree-child or stack-free. 
    Hence, $N$ is not robustly tree-child or robustly stack-free. 
\end{proof}

\begin{Theorem}\label{t:robust.equivalence}
    Let $N$ be a robustly orientable network. Then the following are equivalent:
    \begin{enumerate}
        \item $N$ is robustly tree-child;
        \item $N$ is robustly stack-free;
        \item $N$ is level-$k$ for some $k \le 1$.
    \end{enumerate}
\end{Theorem}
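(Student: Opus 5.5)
The plan is to prove the equivalence by combining Theorem~\ref{t:yes.level1} and Theorem~\ref{t:not.level2}, both of which hold under the standing assumption that $N$ is robustly orientable. Every unrooted binary phylogenetic network is level-$k$ for exactly one $k\ge 0$: take the maximum nullity over its biconnected components, or $0$ if there are none. So the level gives a clean dichotomy between $k\le 1$ and $k\ge 2$.

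First, (3) implies both (1) and (2). If $N$ is level-$k$ with $k\le 1$, then $N$ is robustly tree-child and robustly stack-free by Theorem~\ref{t:yes.level1}. Underneath, this uses that every phylogenetic orientation of $N$ has at most one reticulation per blob, by Observation~\ref{o:level1.implies.stack-free.tree-child}. Robust orientability is needed only to make sure that $N$ is covered by that theorem's hypotheses.

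Next, (1) implies (3) and (2) implies (3), each by contraposition. Suppose $N$ is not level-$k$ for any $k\le 1$. Then $N$ is level-$k$ for some $k\ge 2$, and Theorem~\ref{t:not.level2} produces an edge $e$ and a phylogenetic orientation $o^{**}(N,e)$ in which two reticulations $\ell_1$ and $w$ are adjacent, with $\ell_1$ a parent of $w$. This orientation is not stack-free, since the reticulation $\ell_1$ has a reticulation child. It is also not tree-child: $\ell_1$ has a single child, namely $w$, and $w$ is not a tree vertex. So $N$ is neither robustly tree-child nor robustly stack-free, which gives the contrapositive of both implications. The three statements are then equivalent, with (1) and (2) each equivalent to (3).

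The only delicate point is that (1) and (2) are not directly compared but each is matched to (3). That is legitimate because Theorem~\ref{t:not.level2} refutes both properties with the same orientation. I would briefly check the tree-child failure at $\ell_1$: in $o^{**}$ the out-degree of $\ell_1$ is one, with $w$ as its child, and $w$ has in-degree $2$. I would also note that the trivial level-$0$ case, where $N$ is a tree, is covered by Theorem~\ref{t:yes.level1}. I expect no real obstacle beyond this bookkeeping.
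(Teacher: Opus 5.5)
Your proposal is correct and matches the paper's proof, which also obtains the equivalence directly from Theorems~\ref{t:yes.level1} and~\ref{t:not.level2}: level at most $1$ gives both robust properties, and level at least $2$ refutes both. Your added check that $\ell_1$'s only child in $o^{**}$ is the reticulation $w$ usefully makes explicit why that single orientation violates both tree-child and stack-free.
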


\begin{proof}
    This theorem follows from Theorems~\ref{t:yes.level1} and~\ref{t:not.level2}.
\end{proof}

We now turn our attention to robustly normal networks, 
which we show coincide exactly with the class of phylogenetic trees, 
by explicitly constructing a phylogenetic orientation that is not normal.

\begin{figure}
    \centering
\begin{tikzpicture}[scale=0.9]
    %\draw[help lines] (-5,-5) grid (5,5);
    \foreach \angle [count = \i] in {0, 60, ..., 300} {
        \fill (\angle:2cm) circle (2.5pt) coordinate (dot\i);
        %\node at (\angle: 2.4) {Dot \i};
    }

    \node at (150:2.3cm) {$\rho$};
    \fill (150:1.75cm) circle (2.5pt) coordinate (root);
    \node at (120:2.3cm) {$v_1$};
    \node at (180:2.3cm) {$v_2$};

    \draw [midarrow, thick]  (dot2) -- (dot1);
    \draw [midarrow, thick]  (dot3) -- (dot2);
    \draw [midarrow, thick]  (root) -- (dot3);
    \draw [midarrow, thick]  (root) -- (dot4);
    \draw [midarrow, thick]  (dot1) -- (dot6);
    \draw [midarrow, thick]  (dot6) -- (dot5);
    \draw [midarrow, thick]  (dot5) -- (dot4);
\end{tikzpicture}
    \caption{An example of our construction for normal networks in Theorem~\ref{t:robust.normal}. 
    The vertex $v_2$ is a reticulation vertex, and the edge $(\rho,v_2)$ is a shortcut.}
    \label{fig:normal.construction}
\end{figure}

\begin{Theorem}\label{t:robust.normal}
    Let $N$ be a robustly orientable phylogenetic network. 
    Then $N$ is robustly normal if and only if $N$ is an unrooted phylogenetic tree.
\end{Theorem}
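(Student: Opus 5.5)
Both directions can be handled separately; the substance is in showing that any network with a cycle admits a non-normal phylogenetic orientation.

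\emph{Trees are robustly normal.} If $N$ is an unrooted phylogenetic tree, then every edge is a cut-edge. By Theorem~\ref{t:oriented_away}, every phylogenetic orientation $o(N,e)$ orients all edges away from $\rho$. So every vertex other than $\rho$ has in-degree exactly $1$ and there are no reticulations. Every internal vertex therefore has only tree children, so the orientation is tree-child. A shortcut $(u,v)$ would need a second directed path from $u$ to $v$, which is impossible in a tree. Hence every phylogenetic orientation is normal, and $N$ is robustly normal.

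\emph{Non-trees are not robustly normal.} Suppose $N$ is not a tree, so it has a biconnected component $B$, and choose a simple cycle $C$ in $B$. The target is the picture of Figure~\ref{fig:normal.construction}: root an edge $e=\{v_1,v_2\}$ of $C$ and produce an orientation in which the edge $(\rho,v_2)$ is a shortcut because of the directed path $\rho, v_1, \ldots, v_2$ around the rest of $C$.
\begin{enumerate}
\item Take $s=\rho$ after subdividing $e$. The $st$-orientation of Theorem~\ref{ro:st_orientation} applied to $B$ with the edge $\{\rho,v_2\}$ (viewing the subdivided $B$ as biconnected) makes $\rho$ the unique source and $v_2$ the unique sink within $B$, with every other vertex on some directed path from $\rho$ to $v_2$.
\item Every vertex of $B\setminus\{\rho\}$ is reachable from $v_1$ and reaches $v_2$, so in particular there is a directed $\rho\to v_1\to\cdots\to v_2$ path avoiding the edge $(\rho,v_2)$. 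The edge $(\rho,v_2)$ is therefore a shortcut.
\item Extend to all of $N$ exactly as in the proof of Theorem~\ref{t:no.sinks}: orient the other blobs by $st$-numberings using their entry vertices and another exit vertex (available because $N$ has no sink components, being robustly orientable), and orient cut-edges away from $\rho$.
\end{enumerate}

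The step I expect to be the main obstacle is compatibility of that $st$-orientation with phylogeneticity. In Theorem~\ref{t:no.sinks} the sink of $B$ had to be a vertex $t$ incident with a cut-edge, so that $t$ has an outgoing edge. Here $v_2$ may be lonely, and then it would become a non-leaf sink. I would handle this in one of two ways:
\begin{itemize}
\item Use only the source property: pick $t$ as in Theorem~\ref{t:no.sinks} and obtain an orientation of $B$ with source $\rho$, sink $t$, and every vertex lying on a $\rho$--$t$ path. In any acyclic orientation with $\rho$ as the unique source of $B$, both $v_1$ and $v_2$ are children of $\rho$. If $v_2$ is reachable from $v_1$ (or vice versa), the corresponding root edge is a shortcut.
\item Otherwise, fall back to choosing $e$ so that $v_2=t$ is non-lonely. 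This is possible because a biconnected $B$ which is not the whole of $N$ contains a vertex $t$ with a cut-edge, and $t$ lies on a cycle of $B$. I would then run the $st$-orientation on the subdivided $B$ with $s=\rho$ and sink $t=v_2$.
\end{itemize}
With the second choice, the sink $v_2$ has its cut-edge oriented outward, so it is not a sink in $N$, and the orientation is phylogenetic by the same argument as in Theorem~\ref{t:no.sinks}. Step~2 then gives the shortcut, so $o(N,e)$ is not normal and $N$ is not robustly normal.
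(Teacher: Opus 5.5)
Your proof is correct. The converse (trees are robustly normal) matches the paper, but for the forward direction you take a genuinely different route.

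The paper argues indirectly. A robustly normal network is robustly tree-child, so by Theorem~\ref{t:robust.equivalence} (and hence ultimately the leaf-fusion surgery of Theorem~\ref{t:not.level2}) it is level-$1$ or less. In a level-$1$ network every vertex of a cycle $C$ carries a cut-edge. So the paper takes any phylogenetic orientation rooted on an edge $\{v_1,v_2\}$ of $C$ and redirects the rest of $C$ as a directed path from $v_1$ to $v_2$. This makes $(\rho,v_2)$ a shortcut without creating new sinks.

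You instead construct the shortcut directly in an arbitrary blob $B$ of any level. You root on an edge $\{v_1,t\}$, where $t$ carries a cut-edge, and apply Theorem~\ref{ro:st_orientation} to the subdivided blob for the edge $\{\rho,t\}$. In that orientation $t$ is the only sink of the blob, and its outward cut-edge gives it positive out-degree in $N$. Since $v_1$ has positive out-degree in an acyclic orientation whose only sink is $t$, it reaches $t$, so $(\rho,t)$ is a shortcut. The rest of $N$ is oriented exactly as in Theorem~\ref{t:no.sinks}, using the absence of sink components.

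Your route is self-contained and bypasses Theorems~\ref{t:not.level2} and~\ref{t:robust.equivalence} entirely. It also proves something stronger: every robustly orientable network that is not a tree has a phylogenetic orientation containing a shortcut, independent of any tree-child analysis. The paper's route, once the level reduction is granted, needs nothing more than re-orienting a single cycle.

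One editorial point: your first bullet does not close the argument on its own. In an $st$-orientation whose sink $t$ differs from $v_2$, the two children $v_1,v_2$ of $\rho$ can be incomparable, and then no shortcut appears. You should state the proof using only the second choice, $v_2=t$, for which every step checks out.
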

 
\begin{proof}
    Suppose that $N$ is robustly normal. 
    Then $N$ is necessarily robustly tree-child, as every normal network is tree-child.
    Therefore by Theorem~\ref{t:robust.equivalence}, $N$ is level-$k$ for some $k \le 1$.

    Suppose for the sake of contradiction that $N$ is a level-$1$ network. 
    Consequently, there exists some blob in $N$ containing an undirected simple cycle $C$.  

    Let $e = \{v_1,v_2\}$ be an edge in $C$. 
    As $N$ is level-$1$, 
    there exists a phylogenetic orientation $o(N,e)$ by Theorem~\ref{t:level2robust}. 
    Consider any such phylogenetic orientation $o(N,e)$. 
    After subdividing the edge $e$ with the new root vertex $\rho$, 
    any valid phylogenetic orientation $o(N,e)$ must have directed edges $(\rho,v_1)$ and $(\rho,v_2)$.

    We construct a new orientation $o^*(N,e)$ as follows. 
    The orientations of all edges outside of $C$ are retained from $o(N,e)$, 
    as are directed edges $(\rho,v_1)$ and $(\rho,v_2)$. 
    For the remaining edges of $C$, 
    we orient them so that they form a single directed path $P$ along $C$ from $v_1$ to $v_2$. 
    We depict an example of this construction, restricted to a single biconnected component, 
    in Figure~\ref{fig:normal.construction}.

    First, we verify that $o^*(N, e)$ remains a valid phylogenetic orientation by confirming that 
    it is acyclic and preserves the root and leaf sets of $o(N,e)$. 
    As $o(N, e)$ is a valid orientation with its root in $C$, all directed paths must originate from $C$. 
    To preserve reachability, all cut-edges incident with $C$ must be oriented outward. 
    Since these outward orientations are retained in $o^*(N, e)$ 
    and every non-root vertex in $C$ is adjacent to a cut-edge, 
    every vertex in $C$ maintains an out-degree of at least $1$ under $o^*(N, e)$, 
    ensuring that no new leaves are created. 
    Furthermore, every vertex on $C$ except $\rho$ has an in-degree of at least $1$: 
    $v_1$ receives the edge $(\rho, v_1)$, 
    and all other vertices receive an incoming edge from the directed path~$P$. 
    Thus, no new roots are created and $o^*(N, e)$ has precisely the same leaf set and roots as $o(N, e)$.

    To see that $o^*(N, e)$ is acyclic, 
    note that all edges outside of $C$ retain their acyclic orientation from $o(N, e)$. 
    Consequently, any newly formed directed cycle must contain at least one edge from $C$. 
    However, because all cut-edges incident to $C$ are oriented away from the cycle, 
    any cycle must be entirely contained in $C$. 
    In particular, by construction no directed cycles are contained in $C$, so $o^*(N,e)$ is acyclic.
    
    Finally, observe the local structure of $o^*(N, e)$ at $v_2$. 
    There is a directed path $P$ via $v_1$ to $v_2$, 
    which is initiated by $(\rho,v_1)$ as well as a directed edge $(\rho,v_2)$. 
    This configuration forms a shortcut and so, although $o^*(N, e)$ is a phylogenetic orientation, 
    it is not normal, forming a contradiction. 
    Therefore, if $N$ is robustly normal, then it must be level-$0$ and hence a phylogenetic tree. 

    Conversely, suppose that $N$ is an unrooted phylogenetic tree. 
    By Theorem~\ref{t:level1.structure}, $N$ is robustly orientable. 
    Consequently, for any edge $e$ in $N$, there exists a phylogenetic orientation $o(N,e)$. 
    Under this orientation, $N$ is a directed phylogenetic tree and so must be a normal phylogenetic network.  
\end{proof}

\section*{Discussion}

In the present manuscript, 
we have completely characterised the class of robustly orientable networks as 
precisely those networks with no sink components 
(in the spirit of the classification of orientable networks in~\cite{bulteau2023turning}), 
and showed that it contains the classes of level-$k$ networks for $k \le 2$, and tree-based networks. 
Furthermore, we demonstrated that an unrooted network is robustly stack-free or robustly tree-child 
if and only if it is level-$1$ or less, and robustly normal if and only if it is level-$0$, 
that is, a phylogenetic tree.

Summarising, robust orientability is a surprisingly broad class, 
containing tree-based and all level-$k$ networks for $k \le 2$, 
while robust membership of more restrictive rooted classes collapses to very simple unrooted structures.

We expect future research in this context to fall mainly in three directions. 
Firstly, many of our proofs required the construction of certain forbidden directed substructures. 
There are several remaining classes of directed phylogenetic networks 
that can also be characterised by forbidden substructures, such as
fully tree-sibling networks~\cite{francis2026counting}, 
tree-based networks~\cite{francis2015phylogenetic} and 
labellable networks~\cite{francis2023labellable}. 
Characterising additional classes of robustly class-$\mathcal{C}$ networks is 
therefore one direction of possible future research.

Another major direction is to extend our results to non-binary networks. 
Several of our results take advantage of the fact that the networks in question are binary.
It would be interesting to see how these results could be extended or modified to the non-binary context.

Finally, one can consider problems in which partial information is given. 
In the present manuscript, 
we considered the problem of finding networks 
where each choice of root-edge and phylogenetic orientation yields an oriented network that belongs to some class. 
One could also consider the problem of finding network-edge tuples $(N,e)$ which are robustly class $\mathcal{C}$,
representing a circumstance where the root location is known but not necessarily the orientation of every edge, 
or problems in which the locations of reticulation vertices are known but the root is not, 
and determining whether every phylogenetic orientation that respects the reticulation vertex locations 
are members of some class $\mathcal{C}$.

\section*{Acknowledgements} 

The authors wish to thank Simone Linz for helpful initial conversations at the inception of this project, 
at the International Workshop on Probabilistic, Combinatorial, and Algorithmic Phylogenetics, Taipei, 2026. 
The authors also wish to thank Andrew Francis and Sky Basire for helpful discussions 
throughout the development of this manuscript. 
MH's research was supported by the Australian Government through 
the Australian Research Council's Discovery Projects funding scheme (Project DP260102678). 
The views expressed herein are those of the authors 
and are not necessarily those of the Australian Government or Australian Research Council.

\section*{Conflict of interest} 

The authors herewith certify that they have no affiliations with or involvement in any
organisation or entity with any financial 
(such as honoraria; educational grants; participation in speakers’ bureaus;
membership, employment, consultancies, stock ownership, or other equity interest; 
and expert testimony or patent-licensing arrangements) 
or non-financial (such as personal or professional relationships, affiliations, knowledge or beliefs) 
interest in the subject matter discussed in this manuscript.

\section*{Data availability statement} 

Data sharing is not applicable to this article as no new data were created or analysed in this study.

\bibliographystyle{plainnat}
\bibliography{sample}

\end{document}